\documentclass[12pt,a4paper]{article}

\usepackage[T1]{fontenc}
\usepackage[utf8]{inputenc}
\usepackage{lmodern}
\usepackage{microtype}
\usepackage{geometry}
\usepackage{amsmath,amssymb,amsthm}
\usepackage{booktabs}
\usepackage{tabularx}
\usepackage{array}
\usepackage{hyperref}
\hypersetup{
    colorlinks=true,
    linkcolor=blue,
    citecolor=blue,
    urlcolor=blue
}
\usepackage{url}
\usepackage{enumitem}
\usepackage{titlesec}
\usepackage{abstract}
\usepackage{parskip}
\usepackage{xcolor}

\theoremstyle{plain}
\newtheorem{theorem}{Theorem}[section]
\newtheorem{corollary}[theorem]{Corollary}

\theoremstyle{definition}
\newtheorem{definition}{Definition}[section]
\newtheorem{assumption}{Assumption}[section]

\theoremstyle{remark}
\newtheorem{remark}{Note}[section]

\title{\textbf{Broadcast Zero-Trust Edge Computing: Formal Threat Model
Reduction for Air-Gapped AI Terminals via Physically
Unidirectional Broadcast Datacasting}}

\author{
    Vasu Srinivasan\\
    \small 6Hats AI Labs Inc.\\
    \small \href{mailto:vasu@6hats.ai}{vasu@6hats.ai}
    \and
    Dhriti Vasu\\
    \small University of California, Berkeley\\
    \small \href{mailto:dhritivasu@berkeley.edu}{dhritivasu@berkeley.edu}
}

\date{
\small A provisional patent application covering aspects of this
architecture has been filed.}

\begin{document}
% -----------------------------------------------------------------------

\maketitle

% -----------------------------------------------------------------------
\begin{abstract}
We present a formal argument under a defined adversary model establishing
threat model reduction for a Sovereign AI Architecture---an air-gapped
intelligent terminal performing all inference on-device with receive-only
inbound data is delivered via a physically unidirectional channel—implemented 
using receive-only broadcast infrastructure or certified hardware 
data diodes—with no return path to any external network, and session key 
material is delivered via an out-of-band optical channel. We demonstrate
across four complementary frameworks that this architecture eliminates all
network-mediated compromise vectors reachable by a remote adversary by
construction: attack surface decomposition, remote adversary channel
reachability, lateral movement graph isolation, and cryptographic session
architecture. Under a precisely stated remote adversary model, we establish
that the set of network-reachable attack channels available to a remote
adversary is empty by construction.

We further establish that a physically unidirectional inbound channel is a
pivotal enabling primitive for this architecture. The property required is
that the receiver does not transmit on the inbound channel during normal
operation. This property is enforced at different assurance levels depending
on deployment configuration: by hardware absence of transmit circuitry in
receive-only broadcast and satellite terminals, by certified hardware in data
diode deployments, and by software policy in managed IP deployments. The
argument is transport-agnostic and generalizes to any inbound channel
satisfying Definition~\ref{def:unidirectionality}. Concrete implementations
across these configurations are described in Appendix~A.

We show that self-propagating network worms, botnet enrollment, and all
lateral movement attack classes are impossible against a terminal that is not
a vertex in the institutional network graph. We introduce the framing of
\emph{Broadcast Zero-Trust Edge Computing}---a model in which compute is
sovereign, data ingress is physically unidirectional broadcast, identity is
optically gated, and network attack surface is eliminated by
construction---as a generalizable architecture pattern for regulated AI
deployment. This concept reframes zero-trust principles at the architecture
level: rather than authenticating every network transaction, the architecture
removes network transaction capability entirely.

A reference prototype implementation is described validating the architectural
claims. We additionally identify and correct several common overclaim patterns
found in practitioner-authored security architecture proposals. The residual
attack surface---bounded to line-of-sight physical presence---is the
irreducible minimum for any deployed physical system.

\medskip
\noindent\textbf{Keywords:} air-gapped systems, sovereign AI, threat model
reduction, broadcast datacasting, DVB-S2, DVB-T2, satellite datacasting,
transport-agnostic sovereign channel, broadcast zero-trust, attack surface
analysis, out-of-band key delivery, worm propagation impossibility, regulated
environments, hardware security, edge computing, HD Radio, DAB+, broadcast
infrastructure generalization, ATSC 3.0, NextGen TV
\end{abstract}

\tableofcontents
\newpage

% -----------------------------------------------------------------------
\section{Introduction}
% -----------------------------------------------------------------------

The deployment of AI inference systems in highly regulated environments---clinical
care, financial transaction processing, defense intelligence---creates a fundamental
tension between connectivity and sovereignty. Networked AI terminals inherit the
attack surface of the networks they join. In healthcare, hospital networks are among
the most frequently breached infrastructure categories in the United
States~\cite{hipaajournal2024,ponemon2023}. In financial services,
network-connected terminals are the primary target of advanced persistent threat
(APT) actors~\cite{verizonDBIR2023}. In defense, lateral movement through
enterprise networks has been demonstrated as a viable pathway to sensitive systems
even behind classified boundaries~\cite{cisaAPT2022}.

The conventional response to this tension is hardening: firewalls, VLANs,
intrusion detection, endpoint protection, and network access control. These controls
reduce the probability of exploitation but do not eliminate the attack class. They
are probabilistic guarantees subject to misconfiguration, zero-day vulnerabilities,
insider threats, and supply chain compromise.

This paper examines a structurally different approach: a \emph{Sovereign AI
Architecture} in which the terminal has no network interface, all AI inference runs
entirely on-device, inbound data is delivered via a physically unidirectional 
channel—implemented using receive-only broadcast infrastructure or certified hardware 
data diodes—with no return path to any external network, and session key material 
is delivered via an out-of-band optical channel. We demonstrate that this architecture 
eliminates the network-mediated attack class entirely---not by hardening, but by removal.

A central technical contribution of this paper is the formal characterization of
receiver-side unidirectionality across terrestrial, satellite, and radio broadcast
classes---as a pivotal enabling technology for sovereign architectures. The
architecture requires one property from the transport: deliver signed encrypted
packets in one direction. Three deployment configurations satisfy this requirement:
a hardware data diode on a managed wire, a managed IP connection with outbound
capability disabled, or a physically unidirectional broadcast channel. The formal
argument is independent of which configuration is chosen.

The security guarantee lives in the packet verification step---signed encrypted
packets verified before any processing, output confined to an optical
channel---not in the transport. Deployment configurations differ in how strongly
they enforce unidirectionality: hardware configurations enforce it independent of
software correctness, policy configurations enforce it subject to correct
implementation and operation. Appendix~A describes three concrete implementations
across these configurations for reference.

We additionally introduce the concept of \emph{Broadcast Zero-Trust Edge Computing}
as a generalizable architecture pattern. This concept reframes zero-trust principles
at the architecture level: rather than authenticating every network transaction, the
architecture removes network transaction capability entirely. The pattern has four
defining properties: sovereign compute, physically unidirectional broadcast ingress,
optically gated session identity, and complete elimination of the remote attack
surface.

We further prove that three of the most consequential real-world attack classes
-- self-propagating network worms (WannaCry~\cite{wannacry}, Slammer~\cite{slammer}),
botnet enrollment (Mirai, TrickBot), and lateral movement (SolarWinds aftermath) 
-- are impossible against a terminal that is not a vertex in the institutional network
graph. These results follow directly from the graph isolation argument and require no
additional assumptions.

The paper is structured as follows. Section~\ref{sec:threatmodel} presents threat
model definitions and scope. Section~\ref{sec:attacksurface} formalizes attack
surface decomposition. Section~\ref{sec:broadcast} establishes receiver-side
unidirectionality as the sovereign inbound channel property and surveys deployment
configurations satisfying it. Section~\ref{sec:reachability} presents the remote
adversary channel reachability argument. Section~\ref{sec:lateral} presents lateral
movement graph isolation, worm propagation impossibility, and botnet enrollment
impossibility. Section~\ref{sec:crypto} presents the cryptographic session and key
delivery architecture. Section~\ref{sec:composite} presents the composite threat
model reduction. Section~\ref{sec:discussion} discusses implementation, limitations,
the Broadcast Zero-Trust framing, and fleet health monitoring via session lifecycle
events. Section~\ref{sec:conclusion} concludes.

Throughout, we distinguish between information-theoretic security (holding against
computationally unbounded adversaries) and computational security (holding against
polynomial-time adversaries). The session cryptography employed provides
computational security. The channel reachability and graph isolation arguments are
structural---they hold regardless of adversary compute.

% -----------------------------------------------------------------------
\section{Threat Model and Scope}
\label{sec:threatmodel}
% -----------------------------------------------------------------------

\subsection{Adversary Model}

We consider two adversary classes:

\begin{definition}[Remote Adversary $\mathcal{A}_r$]
A remote adversary is a computationally bounded adversary with no physical presence
at the terminal location. $\mathcal{A}_r$ may:
\begin{itemize}
  \item Control arbitrary nodes on any network connected to the institution
  \item Intercept, modify, or inject traffic on any network path
  \item Operate compromised vendor or supply chain network nodes
  \item Possess nation-state-level network capabilities
\end{itemize}
$\mathcal{A}_r$ may \emph{not}:
\begin{itemize}
  \item Physically access the terminal or its immediate environment
  \item Present optical signals to terminal cameras or displays
  \item Present USB devices to terminal ports
  \item Operate RF transmission equipment in physical proximity to the terminal
    antenna
\end{itemize}
\end{definition}

\begin{definition}[Physical Adversary $\mathcal{A}_p$]
A physical adversary is an adversary with line-of-sight or direct physical access
to the terminal. $\mathcal{A}_p$ may perform any action that $\mathcal{A}_r$ may
perform, and additionally may:
\begin{itemize}
  \item Present optical signals within line-of-sight of the terminal
  \item Physically interact with terminal ports and peripherals
  \item Operate RF transmission equipment in proximity to the terminal
  \item Attempt chassis intrusion
\end{itemize}
\end{definition}

This paper focuses primarily on eliminating the $\mathcal{A}_r$ attack class.
Residual $\mathcal{A}_p$ risks are discussed in Section~\ref{sec:residual}.

\subsection{Scope and Assumptions}

The following assumptions bound the argument. We state them explicitly to avoid the
overclaim failure mode identified in the security literature~\cite{shostack2014,saltzer1975}.

\begin{assumption}[Hardware Trust]
\label{asm:hardware}
We assume the terminal hardware, secure enclave, and firmware were not compromised
during manufacture or supply chain. Hardware trust is a prerequisite for any
deployed security system. Supply chain risks are discussed in
Section~\ref{sec:limitations}.
\end{assumption}

\begin{assumption}[Cryptographic Primitives]
\label{asm:crypto}
We assume AES-256-GCM and ECC (NIST P-256 or Curve25519) are computationally secure
against polynomial-time adversaries. Both are standardized under FIPS
140-3~\cite{fips1403} and NIST SP 800-186~\cite{nistSP800186}. We make no
information-theoretic claims about these primitives.
\end{assumption}

\begin{assumption}[Authenticated Inbound Channel]
\label{asm:auth}
We assume all inbound data received via the broadcast channel is authenticated via a
cryptographic signature verified prior to any processing. Unauthenticated data is
rejected before parsing. This assumption is critical---without it, malicious input
attacks via the broadcast channel remain possible (see Section~\ref{sec:malicious}).
\end{assumption}

\begin{assumption}[Physical Security Perimeter]
\label{asm:physical}
We assume the terminal is deployed within a physical security perimeter consistent
with its regulatory environment (e.g., clinical facility access controls, financial
facility physical security, classified facility physical protection).
\end{assumption}

\subsection{What This Paper Does Not Claim}

Following best practice in high-assurance system papers~\cite{sel4,intelsgx,titansecurity},
we explicitly state what this paper does \emph{not} prove:

\begin{itemize}
  \item We do not claim the system is secure against $\mathcal{A}_p$ adversaries
    with unlimited physical access
  \item We do not claim information-theoretic security for the cryptographic session
  \item We do not claim hardware or firmware integrity (Assumption~\ref{asm:hardware})
  \item We do not claim the system is free of implementation vulnerabilities in
    software components
  \item We do not claim the broadcast transmitter infrastructure is trustworthy
    without authentication (Assumption~\ref{asm:auth})
  \item We do not claim VID/PID USB whitelisting alone is sufficient against a
    determined physical adversary with device-spoofing capability
\end{itemize}

% -----------------------------------------------------------------------
\section{Attack Surface Decomposition}
\label{sec:attacksurface}
% -----------------------------------------------------------------------

\subsection{Structured Attack Surface Model}

Prior work in attack surface measurement~\cite{howard2003,manadhata2011} establishes
that attack surface cannot be treated as a scalar quantity. A single high-severity
interface may present greater risk than many low-severity ones. We therefore adopt a
structured decomposition rather than a cardinality argument.

\begin{definition}[Attack Surface]
The attack surface $S$ of a system is the set of interfaces through which an
adversary can interact with, inject input into, or extract output from the system.
Each interface $i \in S$ is characterized by a triple $(c, r, e)$ where:
\begin{itemize}
  \item $c$ is the channel type (network, optical, physical, acoustic)
  \item $r$ is the reachability class (remote, line-of-sight, contact)
  \item $e$ is the exploitability class (unauthenticated, authenticated,
    hardware-gated, physically-gated)
\end{itemize}
\end{definition}

\begin{definition}[Remote Attack Surface]
The remote attack surface $S_r \subseteq S$ is the subset of interfaces where
$r = \text{remote}$. This is the set reachable by $\mathcal{A}_r$.
\end{definition}

\begin{definition}[Attack Surface Decomposition]
For any system, $S$ can be decomposed as:
\[
  S = S_r \cup S_{\text{los}} \cup S_{\text{contact}}
\]
where $S_{\text{los}}$ is the line-of-sight attack surface and $S_{\text{contact}}$
is the contact (direct physical) attack surface.
\end{definition}

The absence of bidirectional network connectivity eliminates the remote network
attack surface but does not preclude physical or human-mediated compromise. Such
vectors---including social engineering, malicious insiders, and hardware
tampering---fall under the physical adversary ($\mathcal{A}_p$) considered in this
model.

\subsection{Internet-Connected Terminal Attack Surface}

For an internet-connected terminal in a regulated environment:
\[
  S_{\text{connected}} = S_{\text{network}} \cup S_{\text{physical}}
\]
where:
\begin{align*}
  S_{\text{network}} &= \{\text{NIC, LAN interface, Wi-Fi, Bluetooth,
    remote management interface}\}\\
  S_{\text{physical}} &= \{\text{USB (unenumerated), optical I/O, audio I/O,
    display, chassis}\}
\end{align*}
All elements of $S_{\text{network}}$ have $r = \text{remote}$ and vary in
exploitability class depending on configuration.

\subsection{Sovereign AI Architecture Attack Surface}

By the properties of a Sovereign AI Architecture:
\[
  S_{\text{sovereign}} = S_{\text{inbound}} \cup S_{\text{physical}}
\]
where:
\begin{align*}
  S_{\text{inbound}} &= \{\text{broadcast receiver (receive-only, authenticated,}\\
  &\quad r = \text{physical RF transmission capability in proximity to the receiver})\}\\
  S_{\text{physical}} &= \{\text{USB (hardware-whitelisted), optical I/O, audio I/O,
    display, chassis}\}
\end{align*}

Critically: $S_{\text{network}} = \emptyset$ for a Sovereign AI Architecture.
No element of $S_{\text{sovereign}}$ has $r = \text{remote}$. Therefore
$S_{\text{sovereign},r} = \emptyset$.

\begin{theorem}[Remote Attack Surface Elimination (Architectural)]
\label{thm:surfaceelim}
Under the Sovereign AI Architecture, the remote attack surface reachable by a
remote adversary $\mathcal{A}_r$ is empty.
\end{theorem}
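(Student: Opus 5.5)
The proof is a case analysis over the interfaces in $S_{\text{sovereign}}$, using the Attack Surface Decomposition and the definition of $S_r$ as the set of interfaces with $r = \text{remote}$. By the architecture, $S_{\text{sovereign}} = S_{\text{inbound}} \cup S_{\text{physical}}$ with $S_{\text{network}} = \emptyset$. Since the union is finite and explicitly enumerated, it suffices to show that every interface $i \in S_{\text{sovereign}}$ has reachability class $r \neq \text{remote}$. Then
\[
S_{\text{sovereign},r} = S_{\text{sovereign}} \cap \{i : r(i) = \text{remote}\} = \emptyset .
\]
I will argue that no element lies in $S_r$ by matching each interface against the capability lists in the definition of $\mathcal{A}_r$.

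\textbf{Step 1: the physical interfaces.} For each element of $S_{\text{physical}}$ (whitelisted USB, optical I/O, audio I/O, display, chassis), I identify the action needed to interact with it: presenting a USB device, presenting optical signals, physical or line-of-sight access, or chassis intrusion. Each such action appears verbatim among the prohibitions on $\mathcal{A}_r$. Audio I/O is the one item not named in those prohibitions. For it I will invoke the reachability-class definition: it requires acoustic proximity, so its class is line-of-sight or contact, and it therefore belongs to $S_{\text{los}} \cup S_{\text{contact}}$, not to $S_r$.

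\textbf{Step 2: the inbound broadcast receiver.} This is the delicate case, because $\mathcal{A}_r$ controls arbitrary network nodes and possibly vendor infrastructure, which could include the broadcast headend. I will separate two notions.
\begin{itemize}
  \item \emph{Influencing content.} $\mathcal{A}_r$ may be able to inject bits upstream of the transmitter.
  \item \emph{Reaching an interface.} By Definition~\ref{def:unidirectionality} of the inbound channel, the receiver never transmits, so $\mathcal{A}_r$ has no return path and cannot conduct an interactive session with it.
\end{itemize}
By Assumption~\ref{asm:auth}, any injected content that is not signed by the legitimate key is rejected before parsing. Producing a valid signature would require breaking the signature scheme, which Assumption~\ref{asm:crypto} rules out against polynomial-time $\mathcal{A}_r$. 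The only remaining way to present unauthenticated signals at the RF front end is local RF transmission near the antenna, which is explicitly excluded from $\mathcal{A}_r$. Hence the receiver's reachability class is the proximity class recorded in $S_{\text{inbound}}$, not remote. Step 2 is the main obstacle, because it is where the claim becomes conditional on cryptographic and authentication assumptions rather than purely structural. The theorem should therefore be stated as holding under Assumptions~\ref{asm:crypto} and~\ref{asm:auth}. I would also flag that, without those assumptions, compromise of the upstream headend reduces to the malicious-input case treated later in the paper.

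\textbf{Step 3: conclude.} Since $S_{\text{network}} = \emptyset$ and every remaining interface has been placed in $S_{\text{los}} \cup S_{\text{contact}}$, it follows that $S_{\text{sovereign},r} = \emptyset$. Assumption~\ref{asm:hardware} is needed to exclude hidden radios or management engines that would add unlisted network interfaces, and so to justify that the enumeration of $S_{\text{sovereign}}$ is complete.
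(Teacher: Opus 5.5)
Your case analysis has the same shape as the paper's one-paragraph proof, and Steps~1 and~3 are fine. Your appeal to Assumption~\ref{asm:hardware} to justify completeness of the enumeration is a reasonable extra that the paper leaves inside ``by construction''. The problem is Step~2, which argues the wrong property. In the paper's framework an interface carries a triple $(c,r,e)$, and $S_r$ is determined by the reachability coordinate $r$ alone. Authentication belongs to the exploitability coordinate $e$. In Theorem~\ref{thm:remoteimpossibility} it is used for condition~(i), deliberately kept separate from the reachability condition~(iii).

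Once you concede that $\mathcal{A}_r$ can inject bits upstream of the transmitter, you have conceded that $\mathcal{A}_r$ can inject input into the receiver without physical presence. The attack-surface definition counts \emph{injecting input}, not only interactive sessions, so your ``influencing content'' versus ``reaching an interface'' split does not keep the receiver out of $S_{\text{sovereign},r}$. Signature verification would then show that this remote element has $e=\text{authenticated}$, i.e.\ that it is not exploitable, not that it is absent. So your sentence ``Hence the receiver's reachability class is the proximity class'' does not follow. The cryptographic step also has a hole of its own: if $\mathcal{A}_r$ controls the headend and the signing key lives there, it signs validly without breaking anything. Assumption~\ref{asm:crypto} says nothing about key compromise.

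The missing idea is the trust classification the paper relies on. The broadcast transmitter and its signing infrastructure are a \emph{trusted party} of the architecture (Section~\ref{sec:malicious}, and Limitation~8.2 in Section~\ref{sec:limitations}). Their compromise is therefore a supply-chain event, not an $\mathcal{A}_r$ capability. The only remaining way to put signals into the receiver is to emit RF in the broadcast band, which the Physical RF Injection note in Section~\ref{sec:broadcast} places in $\mathcal{A}_p$. That is why the paper can record the receiver's class as RF proximity by construction and conclude $S_{\text{sovereign},r}=\emptyset$ with no cryptographic assumption. The theorem is architectural as stated; Assumptions~\ref{asm:crypto} and~\ref{asm:auth} enter only later, in Theorem~\ref{thm:remoteimpossibility}, to defeat writability. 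If you replace your Step~2 with this trust classification, the proposed conditioning of the statement on those assumptions becomes unnecessary.
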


\begin{proof}
By construction, no interface in $S_{\text{sovereign}}$ has $r = \text{remote}$.
The broadcast receiver has $r = \text{physical RF transmission capability in
proximity to the receiver}$---requiring an adversary to operate RF transmission
infrastructure at the deployment location---rather than $r = \text{remote}$
(network-reachable). USB interfaces require $r = \text{contact}$. Optical and audio
interfaces require $r = \text{line-of-sight}$. Therefore
$S_{\text{sovereign},r} = \emptyset$ under the remote adversary model of
Definition~2.1.
\end{proof}

\begin{remark}[Attack Surface Cardinality]
We do not claim that $|S_{\text{sovereign}}| < |S_{\text{connected}}|$ establishes
a security ordering. We claim the stronger structural result:
$S_{\text{sovereign},r} = \emptyset$. Cardinality comparisons are neither necessary
nor sufficient for this conclusion. Attack surface cardinality provides a lower bound
on potential exploit vectors but does not fully determine
exploitability~\cite{howard2003}.
\end{remark}

\subsection{Malicious Input via Broadcast Channel}
\label{sec:malicious}

A critical attack class not eliminated by air-gapping alone is malicious input
injection via the authenticated inbound channel~\cite{checkoway2011}. If an
adversary can influence the content of broadcast transmissions, they may attempt:

\begin{itemize}
  \item Buffer overflow in the broadcast packet decoder
  \item Parser vulnerabilities in payload deserialization
  \item Model weight poisoning via firmware update injection
  \item Compression bombs or resource exhaustion attacks
  \item Inference engine deserialization attacks
\end{itemize}

\noindent\textbf{Mitigation:} Assumption~\ref{asm:auth} requires cryptographic
signature verification prior to any payload processing. An adversary cannot inject
authenticated payloads without control of the signing key. The broadcast transmitter
operator is therefore a trusted party in this architecture, and its compromise is
treated as a supply chain threat (Section~\ref{sec:limitations}) rather than a
remote network threat.

% -----------------------------------------------------------------------
\section{Broadcast Datacasting as Sovereign Inbound Channel}
\label{sec:broadcast}
% -----------------------------------------------------------------------

\subsection{General Broadcast Datacasting Model}

The sovereign inbound channel requires one property: physical unidirectionality at
the receiver. Any broadcast medium satisfying this property is a valid
implementation. Three classes of broadcast infrastructure satisfy it:

Terrestrial broadcast. DVB-T2 (Europe, Asia, Africa), ISDB-T (Japan, South America), ATSC 3.0 (North America, South Korea), and regional standards. All operate on licensed spectrum with receive-only consumer hardware containing no transmit circuitry. ATSC 3.0 (NextGen TV) is the current North American standard, with deployments in over 60 markets as of 2024 [39]. It delivers IP-encapsulated payloads via the ROUTE protocol (A/331) [17] over a physical layer defined in A/322 [5] and the system standard A/300 [6], and supports addressed file delivery to specific terminal Device IDs within a broadcast footprint without any return path on the receiver device.

Inbound data is delivered via a physically unidirectional channel—implemented using receive-only broadcast infrastructure or certified hardware data diodes—with no return path to any external network.

\paragraph{Satellite broadcast.} DVB-S2 and DVB-S2X support IP datacasting via
Generic Stream Encapsulation over geostationary and LEO satellite links with global
coverage. Receive-only VSAT downlink terminals contain no transmit circuitry on the
downlink frequency. Satellite datacasting extends sovereign deployment to maritime,
remote, and international locations outside terrestrial coverage.

\paragraph{Radio datacasting.} HD Radio (IBOC), DAB+, and DRM support in-band
datacasting over licensed radio spectrum. Throughput is lower than satellite or
terrestrial broadcast but sufficient for session payload and firmware delivery in
bandwidth-constrained deployments.

The security argument applies to all three classes and to managed IP configurations
satisfying Definition~\ref{def:unidirectionality}. The formal properties
required---receiver-side unidirectionality, authentication of received
content---are properties of the deployment configuration, not of any specific
standard. Table~\ref{tab:configs} maps configurations to enforcement tier.

\subsection{Addressed Payload Delivery over Broadcast}

The sovereign inbound channel requires addressed payload delivery: encrypted
payloads must be routable to a specific terminal Device ID within a broadcast
footprint. This property enables per-terminal session payloads and fleet-wide
firmware updates to coexist on the same broadcast channel without requiring a
return path.

Satellite broadcast achieves this via DVB-S2X Generic Stream Encapsulation with per-stream addressing. Terrestrial broadcast achieves this via ROUTE-class protocols: DVB-T2 and ATSC 3.0 both implement the ROUTE protocol (A/331) [17] for IP-layer addressed delivery, enabling per-terminal session payloads and fleet-wide firmware distribution to coexist on a single broadcast multiplex. ATSC 3.0 additionally supports the ATSC Link-layer Protocol (ALP) for efficient IP encapsulation over the physical layer [5].
A single satellite footprint or terrestrial tower covers geographic areas containing potentially thousands of
deployed terminals, enabling deterministic fleet-wide updates with zero network infrastructure dependency at the receiving end~\cite{pearl2023}.

The architecture requires only that the channel deliver signed encrypted packets to
addressed receivers in one direction. Which broadcast medium or managed IP
configuration provides this is a deployment engineering decision.

\subsection{Unidirectionality on the Receiver Device}

The property that makes any broadcast or managed inbound channel suitable for
sovereign architecture is receiver-side unidirectionality as defined in
Definition~\ref{def:unidirectionality}. The enforcement mechanism varies by
deployment configuration; the security argument is the same in each case.

\begin{definition}[Unidirectionality (Receiver-Side)]
\label{def:unidirectionality}
A channel satisfies receiver-side unidirectionality if the receiver does not transmit
on the inbound channel during operation. This property is a characteristic of the
deployment configuration, not of the transport medium universally. Enforcement
mechanisms range from hardware absence of transmit circuitry (strongest: cannot be
subverted by software means) to certified hardware data diodes
(hardware-enforced at the boundary) to software policy on a managed interface
(holds under correct operation). The security argument of this paper holds under any
configuration satisfying this definition. The strength of the enforcement mechanism
is a deployment decision addressed in Appendix~A.
\end{definition}

Hardware-enforced configurations satisfy this definition unconditionally:
receive-only broadcast and satellite terminals contain no transmit circuitry on the
inbound frequency. Software cannot provision hardware that does not exist.
Policy-enforced configurations satisfy this definition conditionally: the return
channel is absent under correct operation but the hardware retains transmit
capability. The distinction is reflected in Table~\ref{tab:configs}.

\begin{remark}[Physical RF Injection]
An adversary could in principle attempt RF signal injection using SDR equipment or a
rogue transmitter targeting the receiver antenna. While high-gain antennas or
broadcast infrastructure can transmit from significant distances, such attacks still
require the deployment or control of RF transmission equipment operating in the
relevant broadcast spectrum.

In this threat model, $\mathcal{A}_r$ is restricted to adversaries acting through
conventional network infrastructure. Adversaries capable of transmitting RF signals
into the broadcast band---whether locally or from longer range---therefore fall
within the $\mathcal{A}_p$ threat class because they must operate specialized RF
transmission infrastructure rather than network-reachable software interfaces.
Additionally, broadcast payload authentication (Assumption~\ref{asm:auth}) ensures
that unauthenticated datacast content is rejected by the terminal, so RF signal
injection alone does not provide a viable remote compromise vector.
\end{remark}

\begin{theorem}[Return Channel Absence Under Hardware-Enforced Configuration]
\label{thm:returnchannel}
A receive-only broadcast terminal (satellite VSAT downlink, terrestrial DVB-T2, or
equivalent) cannot transmit data on the inbound channel by any software means
executing on the receiver device.
\end{theorem}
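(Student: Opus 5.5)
The argument is a short structural one that rests on the hardware-trust assumption. I would model the receiver device as a composition of a hardware layer $H$ (antenna, RF front end/tuner, demodulator, host processor and peripherals) and a software layer $P$. $P$ is any program, including adversarially chosen code running with full privileges. The key point to formalize is that software acts on the physical world only by writing to registers or memory-mapped interfaces of devices present in $H$. The set of physical emissions reachable by any $P$ is therefore bounded by the set of emission capabilities realized in $H$. I would state this as a containment: for every program $P$, $\mathrm{Emit}(P, H) \subseteq \mathrm{Cap}(H)$, where $\mathrm{Cap}(H)$ is the set of (frequency band, signal) pairs that some circuit in $H$ can radiate or drive.

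Next I would instantiate $\mathrm{Cap}(H)$ for the hardware-enforced configurations named in the statement and in Definition~\ref{def:unidirectionality}. For a receive-only VSAT downlink terminal, the LNB and tuner chain contain no upconverter, power amplifier or oscillator path that drives the antenna on the downlink band. DVB-T2 and equivalent terrestrial receivers likewise contain only a receive path. Hence $\mathrm{Cap}(H)$ contains no element on the inbound channel's frequency. By Assumption~\ref{asm:hardware}, the delivered hardware matches this specification: no transmit circuitry was implanted during manufacture or supply chain. Combining this with the containment gives $\mathrm{Emit}(P,H) \cap \text{InboundChannel} = \emptyset$ for all $P$, which is the claim. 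This is the precise sense of ``software cannot provision hardware that does not exist.''

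The main obstacle is unintended emission paths. Oscillators, local-oscillator leakage from the LNB, clock harmonics and bus activity all radiate, and software can modulate some of them (the familiar air-gap covert-channel results). So the containment argument is only as strong as the definition of $\mathrm{Cap}(H)$. I would handle this by scoping the theorem to \emph{data transmission on the inbound channel}: a signal an external receiver could decode in the broadcast band via the antenna path. I would argue that LO leakage is a fixed, non-data-bearing tone attenuated by the LNB's isolation. I would also separate software-modulated side-channel emanations explicitly as an $\mathcal{A}_p$ concern, since decoding them requires receiving equipment in physical proximity. These are bounded in Section~\ref{sec:residual} rather than falling under this theorem. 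If that separation cannot be made cleanly, the fallback is to weaken the statement to ``no intended transmit path exists,'' and to treat residual emanations as an explicit exception.
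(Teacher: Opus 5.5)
Your core argument is exactly the paper's proof: the containment $\mathrm{Emit}(P,H)\subseteq\mathrm{Cap}(H)$ plus the absence of transmit circuitry on the downlink band (``software cannot provision hardware that does not exist''). Citing Assumption~\ref{asm:hardware} to justify that the fielded unit really lacks such circuitry is a useful step the paper leaves implicit.

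The step that fails is in your obstacle paragraph: local-oscillator leakage is not a fixed, non-data-bearing tone. Host software chooses the tuner frequency. On a satellite receiver it also drives the LNB band select (22\,kHz tone), polarisation (13/18\,V) and LNB power. So the leakage can be keyed on/off or shifted in frequency at will.
\begin{itemize}
  \item On a Ku-band universal LNB the LO frequencies (9.75/10.6\,GHz) fall just outside the 10.7--12.75\,GHz downlink band, so your band restriction still saves you.
  \item The low-IF and zero-IF silicon tuners common in terrestrial receivers are different. Their LO sits at or within a few MHz of the tuned channel and couples back out of the antenna port through the LNA's finite reverse isolation. This is why receiver EMC standards limit LO emission at the antenna terminal.
\end{itemize}
In the terrestrial case you have a software-modulable signal in the broadcast band, emitted via the antenna path. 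That is inside the very scope you chose for the theorem, so the ``fixed tone'' move does not exclude it. Reassigning it to $\mathcal{A}_p$ does not rescue the proof either: it changes the claim from ``cannot transmit'' to ``cannot be received remotely.'' The paper's residual-surface section (Section~\ref{sec:residual}) also does not bound emanations, so that citation cannot carry the weight you put on it.

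What does work is your fallback. Read ``transmit'' as radiation through an intentional transmit chain (upconverter, power amplifier, antenna drive), i.e.\ ``no intended transmit path exists,'' and list emanations as an explicit exception outside the theorem. That is the reading under which the paper's three-sentence proof is valid. The paper never raises the issue, so once you adopt the fallback your proposal and the paper's argument coincide.
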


\begin{proof}
Transmission on a licensed broadcast downlink requires transmit-capable hardware on
the relevant frequency. A receive-only terminal contains no such hardware. Software
cannot provision hardware that does not exist. Therefore no software-mediated return
channel exists on the receiver device under this configuration.

\emph{Note:} Under policy-enforced configurations (managed IP with outbound
disabled, 5G Broadcast in broadcast-only device profile), the return channel is
absent under correct operation but the enforcement is a software guarantee rather
than a hardware one. The security argument of Section~\ref{sec:reachability} holds
under either enforcement tier; the distinction affects the assumptions under which
it holds, not the structure of the argument.
\end{proof}

\subsection{Comparison of Unidirectionality Enforcement Mechanisms}

Several network security architectures employ software-defined data diodes or
one-way gateways to enforce unidirectionality~\cite{tuptuk2018,waterfall2022}.
These approaches provide strong security properties but differ from physical
unidirectionality in a critical respect: their guarantees depend on the correctness
of the software implementation and hardware configuration.

Receiver-side unidirectionality (Definition~\ref{def:unidirectionality}) is a
property of the deployment configuration, not a universal property of any transport
medium. The same physical medium can be deployed in configurations that satisfy
Definition~\ref{def:unidirectionality} to different assurance levels.
Table~\ref{tab:configs} maps enforcement mechanism to assurance level.
Hardware-enforced configurations provide the guarantee independent of software
correctness. Policy-enforced configurations provide it subject to correct
implementation and operation.

\begin{table}[ht]
\centering
\caption{Inbound Channel Configurations and Unidirectionality Enforcement}
\label{tab:configs}
\begin{tabularx}{\textwidth}{@{}Xll@{}}
\toprule
\textbf{Configuration} & \textbf{Enforcement} & \textbf{Def.~\ref{def:unidirectionality}} \\
\midrule
Firewall rule & Software & No \\
Managed IP, outbound disabled & Kernel policy & Conditional \\
5G Broadcast (NR MBS) & Config, bidir platform & Conditional \\
Software data diode & Software/firmware & Conditional \\
Hardware data diode & Hardware & Yes \\
Satellite (DVB-S2/S2X) & Hardware + spectrum & Yes \\
Terrestrial (DVB-T2, ISDB-T) & Hardware + spectrum & Yes \\
Terrestrial (ATSC 3.0 / NextGen TV) & Hardware + spectrum & Yes \\
Radio (DAB+, DRM, HD Radio) & Hardware + spectrum & Yes \\
\bottomrule
\end{tabularx}
\smallskip
\begin{minipage}{\textwidth}
\small\textit{Conditional:} absent under correct operation; hardware retains transmit capability.\\
\textit{Yes:} cannot be created by software means on the receiver device.\\
Argument of Sections~\ref{sec:reachability}--\ref{sec:lateral} holds for all \textit{Yes} and \textit{Conditional} rows.
\end{minipage}
\end{table}

% -----------------------------------------------------------------------
\section{Remote Adversary Channel Reachability}
\label{sec:reachability}
% -----------------------------------------------------------------------

\subsection{Channel Reachability Under Remote Adversary Model}

\begin{definition}[Remote Compromise Condition]
\label{def:remotecompromise}
Remote compromise by $\mathcal{A}_r$ requires the existence of a channel
$C \in S$ such that:
\begin{enumerate}[label=(\roman*)]
  \item $C$ is writable by $\mathcal{A}_r$ (adversary can inject content)
  \item $C$ is readable by the terminal (terminal processes injected content)
  \item $C$ has $r = \text{remote}$ ($\mathcal{A}_r$ can reach $C$ without physical
    presence or RF transmission infrastructure)
\end{enumerate}
All three conditions must hold simultaneously.
\end{definition}

\begin{theorem}[Remote Compromise Impossibility Under Remote Adversary Model]
\label{thm:remoteimpossibility}
Under Assumption~\ref{asm:hardware} (hardware trust), Assumption~\ref{asm:auth}
(authenticated inbound channel), and the remote adversary model of Definition~2.1,
no channel $C$ satisfying conditions (i)--(iii) simultaneously exists in a Sovereign
AI Architecture.
\end{theorem}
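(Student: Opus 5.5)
The plan is a case analysis over the interfaces of $S_{\text{sovereign}} = S_{\text{inbound}} \cup S_{\text{physical}}$. For each candidate channel $C$, I will show that at least one of conditions (i)--(iii) of Definition~\ref{def:remotecompromise} fails. Since the definition requires all three conditions at once, a single failing condition rules $C$ out. Showing this for every $C \in S_{\text{sovereign}}$ establishes the theorem.

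\textbf{Step 1: the network interfaces.} By the construction in Section~\ref{sec:attacksurface}, $S_{\text{network}} = \emptyset$. Hence no NIC, Wi-Fi, Bluetooth or remote-management channel exists to serve as $C$. Assumption~\ref{asm:hardware} is used here: it guarantees that the shipped hardware really lacks such interfaces, and that no covert transceiver was implanted in manufacture. Without it, a supply-chain adversary could add an interface with $r = \text{remote}$.

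\textbf{Step 2: the physical interfaces.} For $C \in S_{\text{physical}}$, Theorem~\ref{thm:surfaceelim} classifies USB as $r = \text{contact}$ and the optical, audio and display channels as $r = \text{line-of-sight}$. So condition (iii) fails. Writing to these channels also requires actions that Definition~2.1 explicitly denies to $\mathcal{A}_r$: presenting optical signals and presenting USB devices. So condition (i) fails independently.

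\textbf{Step 3: the inbound broadcast channel.} This is the main obstacle, because $\mathcal{A}_r$ does have network-level influence upstream of the transmitter. I would split into two sub-cases by how $\mathcal{A}_r$ might write to $C$.

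In the first sub-case, $\mathcal{A}_r$ injects RF directly at the receiver. This needs RF transmission equipment, which Definition~2.1 excludes and the Physical RF Injection remark assigns to $\mathcal{A}_p$. So condition (iii) fails.

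In the second sub-case, $\mathcal{A}_r$ injects content into the legitimate broadcast feed by compromising network paths to the head-end. The bits may then reach the terminal, but by Assumption~\ref{asm:auth} every packet is signature-verified before parsing. Without the signing key, which is held by the trusted transmitter operator, injected content is rejected pre-parse. The terminal therefore does not \emph{process} it, and condition (ii) fails. Assumption~\ref{asm:crypto} is implicitly needed to make forging infeasible for a polynomial-time $\mathcal{A}_r$. The result is consequently computational in this sub-case, and I would state that caveat explicitly.

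Finally, by Theorem~\ref{thm:returnchannel}, together with Definition~\ref{def:unidirectionality} for conditional configurations, the inbound channel offers no return path. So $\mathcal{A}_r$ cannot turn it into an interactive channel that would evade the per-packet argument. Combining the three steps exhausts $S_{\text{sovereign}}$ and yields the theorem.
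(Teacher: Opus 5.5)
Your proposal is correct and follows essentially the same route as the paper. Both proofs run a channel-by-channel case analysis over $S_{\text{sovereign}}$: the physical interfaces are ruled out by their contact or line-of-sight reachability class, and the inbound channel is handled both by the physical RF-transmission requirement and by signature verification before any processing under Assumption~\ref{asm:auth}. The differences are bookkeeping. You charge the signature defense to condition~(ii), where the paper charges it to condition~(i). You also state explicitly the roles of Assumption~\ref{asm:hardware} (no covert interfaces) and Assumption~\ref{asm:crypto} (unforgeability of signatures), which the paper's proof leaves implicit.
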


\begin{proof}
We examine each channel in $S_{\text{sovereign}}$:

\paragraph{Channel 5.1 --- Inbound Broadcast/Managed Channel.}
By Theorem~\ref{thm:returnchannel}, under hardware-enforced configurations the
inbound channel has no return path on the receiver device. Under policy-enforced
configurations the return channel is absent under correct operation. In either case,
an adversary wishing to inject content must operate RF transmission infrastructure
within physical proximity of the receiver, or compromise the managed IP outbound
policy on the terminal---the former requires physical presence; the latter requires
either physical access or supply chain compromise of the terminal software stack.
Condition~(iii) is not satisfied for $\mathcal{A}_r$ under either configuration.

Furthermore, by Assumption~\ref{asm:auth}, all received content must pass
cryptographic signature verification before processing. An adversary without the
signing key cannot produce authenticated payloads. Condition~(i) is not satisfied
without physical compromise of the signing infrastructure.

\paragraph{Channel 5.2 --- USB Ports.}
USB enumeration is hardware-gated via kernel-level enforcement. Only devices whose
Vendor ID, Product ID, and device class match the approved hardware whitelist are
enumerated by the OS. VID/PID values alone are insufficient as these can be spoofed;
device class matching provides an additional layer. $\mathcal{A}_r$ cannot present
a physical USB device to a port without physical presence. Condition~(iii) is not
satisfied.

\paragraph{Channel 5.3 --- Camera.}
The camera accepts optical input requiring line-of-sight physical proximity
($r = \text{line-of-sight}$). $\mathcal{A}_r$ cannot present optical signals to a
physical camera without physical presence at the deployment location.
Condition~(iii) is not satisfied. Line-of-sight physical attacks including telephoto
optics, drone cameras, and reflection capture are $\mathcal{A}_p$ attacks addressed
by physical security controls.

\paragraph{Channel 5.4 --- Microphone.}
The microphone captures local audio requiring acoustic proximity ($r =
\text{line-of-sight}$ or contact). $\mathcal{A}_r$ cannot inject audio signals
without physical proximity. Condition~(iii) is not satisfied.

\paragraph{Channel 5.5 --- Display.}
The display is output-only. It is not an inbound channel. Conditions~(i) and~(ii)
are not applicable.

\medskip
\noindent\textbf{Conclusion:} No channel $C \in S_{\text{sovereign}}$ satisfies all
three conditions of Definition~\ref{def:remotecompromise} for $\mathcal{A}_r$.
Therefore, under the remote adversary model, remote compromise of a Sovereign AI
Architecture is not possible by any network-mediated means.
\end{proof}

\subsection{Contrast with Internet-Connected Terminal}

For an internet-connected terminal, the NIC satisfies all three conditions of
Definition~\ref{def:remotecompromise} by design:
\begin{enumerate}[label=(\roman*)]
  \item $\mathcal{A}_r$ can inject traffic on the network interface
  \item The terminal processes network traffic
  \item Network interfaces have $r = \text{remote}$
\end{enumerate}

The security question for a connected terminal is not whether a channel satisfying
conditions (i)--(iii) exists---it does, by design---but whether the network stack,
application layer, and configuration prevent exploitation of that channel. This is a
computational security question with a probabilistic answer that degrades with
adversary capability and time.

% -----------------------------------------------------------------------
\section{Lateral Movement Graph Isolation}
\label{sec:lateral}
% -----------------------------------------------------------------------

\subsection{Network Graph Model}

\begin{definition}[Institutional Network Graph]
\label{def:networkgraph}
Model the institutional network as a directed graph $G = (V, E)$ where $V$ contains
devices with network interfaces participating in the institutional network, $E
\subseteq V \times V$ is the set of directed network connections, and an adversary
controlling node $A \in V$ seeks to reach target node $T$.
\end{definition}

\begin{definition}[Lateral Movement Path]
A lateral movement path $P$ from $A$ to $T$ is a directed path $(A, v_1, v_2,
\ldots, T)$ in $G$. The existence of $P$ depends on network topology, segmentation,
and vulnerabilities present along each edge.
\end{definition}

\subsection{Graph Isolation Theorem}

\begin{theorem}[Graph Isolation]
\label{thm:graphisolation}
A terminal $T$ with no network interface is not a vertex in the institutional
network graph $G$. Consequently, no lateral movement path $P$ from any $A \in V$
to $T$ exists.
\end{theorem}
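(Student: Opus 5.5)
The argument is definitional and has two steps: first show $T \notin V$, then conclude that no path in $G$ can end at $T$.

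\textbf{Step 1: $T \notin V$.} By Definition~\ref{def:networkgraph}, $V$ contains exactly the devices that have network interfaces and participate in the institutional network. By the construction in Section~\ref{sec:attacksurface}, $S_{\text{network}} = \emptyset$ for a Sovereign AI Architecture: there is no NIC, LAN, Wi-Fi, Bluetooth, or remote management interface. The inbound channel also does not make $T$ a participant. Under hardware-enforced configurations, Theorem~\ref{thm:returnchannel} shows that $T$ cannot transmit on it. Under policy-enforced configurations satisfying Definition~\ref{def:unidirectionality}, $T$ does not transmit during operation. A receive-only endpoint is therefore not a device exchanging network connections with institutional nodes. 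So $T$ fails the membership criterion for $V$, and $T \notin V$.

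\textbf{Step 2: no lateral movement path reaches $T$.} A lateral movement path is a directed path $(A, v_1, \ldots, T)$ in $G$, so every vertex on it, including the endpoint $T$, must lie in $V$. Equivalently, its final edge $(v_k, T)$ would have to lie in $E \subseteq V \times V$, which forces $T \in V$. Step~1 rules this out. Hence for every $A \in V$ the set of paths from $A$ to $T$ is empty. This holds regardless of topology, segmentation, or the vulnerabilities present on edges, because those only affect which paths exist \emph{inside} $G$. The statement is structural and independent of adversary compute, as the introduction anticipates.

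\textbf{Main obstacle.} The delicate point is Step~1 for the inbound channel. One might argue that the broadcast or managed-IP head-end is a node in $V$ and that its link to $T$ is an edge. I would address this in two ways. First, I would read ``edge'' in Definition~\ref{def:networkgraph} as a connection between two devices with network interfaces, and note that $T$ has none by construction. Second, I would observe that even if one chose to model the broadcast link as a one-way arc into $T$, it would give $T$ out-degree zero. Lateral movement could then reach $T$ only by injecting authenticated content, which Theorem~\ref{thm:remoteimpossibility} together with Assumption~\ref{asm:auth} already excludes. The same out-degree-zero property also guarantees that $T$ cannot serve as an intermediate pivot. For the policy-enforced rows of Table~\ref{tab:configs}, I would state explicitly that the conclusion holds conditionally on correct operation of the outbound-disable policy.
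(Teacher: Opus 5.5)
Your argument is correct and is essentially the paper's proof: the hypothesis that $T$ has no network interface, together with Definition~\ref{def:networkgraph}, gives $T \notin V$, and since a directed path in $G$ can only terminate at a vertex of $V$ (your $E \subseteq V \times V$ observation), no path from any $A \in V$ reaches $T$. The extra discussion of the inbound channel is unnecessary and its extension does not hold: the policy-enforced (Conditional) configurations of Table~\ref{tab:configs} retain a network interface, so they fall outside the theorem's hypothesis, and disabling outbound traffic only gives $T$ out-degree zero (no pivoting) while leaving inbound arcs, and hence paths ending at $T$, intact---which is also why your one-way-arc fallback shows, via Assumption~\ref{asm:auth}, that such a path is not exploitable rather than that it does not exist.
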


\begin{proof}
By Definition~\ref{def:networkgraph}, $V$ contains only devices with network
interfaces participating in the institutional network. A Sovereign AI Architecture
terminal has no network interface by construction. Therefore $T \notin V$. A
directed path in $G$ can only terminate at a vertex $v \in V$. Since $T \notin V$,
no path $P$ exists such that $P$ terminates at $T$.
\end{proof}

This result is independent of:
\begin{itemize}
  \item Network topology within $G$
  \item Quality of firewall and segmentation controls
  \item Sophistication of $\mathcal{A}_r$'s lateral movement capability
  \item Number of compromised nodes within $G$
\end{itemize}

Theorem~\ref{thm:graphisolation} is the strongest and most assumption-minimal
result in this paper. It requires only that the terminal has no network
interface---a physical property verifiable by inspection.

\subsection{Regulatory Implication of Graph Isolation}

Graph isolation has direct compliance implications across regulated environments:

\paragraph{Healthcare.} Under HIPAA Technical Safeguards~\cite{hipaa312}, covered
entities must implement technical controls to prevent unauthorized network access to
systems containing electronic Protected Health Information (ePHI). A terminal
outside the network graph contains no ePHI that is network-reachable, satisfying
the transmission security requirement by architectural elimination rather than
control implementation.

\paragraph{Financial Services.} PCI-DSS Requirement 1~\cite{pcidss} mandates
network security controls for systems in the cardholder data environment (CDE). A
terminal with no network interface has no network presence in the CDE and falls
outside the scope of network security controls---not by exemption, but by
architecture.

\paragraph{Defense.} CMMC Level 2 and 3~\cite{cmmc} require protection of
Controlled Unclassified Information (CUI) per NIST SP 800-171~\cite{nistSP800171}.
A terminal not on the network requires no network-layer CUI protection. Its
protection requirements reduce to physical access controls only.

\subsection{Corollary: Institutional Network Breach Does Not Propagate}

\begin{corollary}
\label{cor:breach}
A breach of the institutional network $G$ does not propagate to a Sovereign AI
Architecture terminal.
\end{corollary}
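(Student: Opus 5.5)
The plan is to reduce Corollary~\ref{cor:breach} directly to Theorem~\ref{thm:graphisolation}. First I will formalize what ``breach'' and ``propagation'' mean in the model of Definition~\ref{def:networkgraph}. A breach of $G$ is a nonempty set $B \subseteq V$ of adversary-controlled vertices; this allows arbitrarily many, up to $B = V$. Propagation of the breach to a target $T$ means that an adversary controlling some $A \in B$ gains control of, or injects processed content into, $T$. Within the graph model this can only happen along a lateral movement path from $A$ to $T$. So the first step is to state explicitly that any network-mediated propagation is witnessed by such a path, possibly of length one (a direct edge $(A,T)$).

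The second step is to invoke Theorem~\ref{thm:graphisolation}. Because a Sovereign AI Architecture terminal has no network interface, $T \notin V$. Hence for every $A \in B \subseteq V$ there is no directed path in $G$ ending at $T$. Taking the union over all $A \in B$ shows that no compromised vertex reaches $T$. This holds whatever the size of $B$ and whatever the topology of $G$, which matches the independence remarks following the theorem.

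The third step, and the main obstacle, is to rule out propagation that leaves the graph model. The issue is that a compromised node in $G$ could influence $T$ through a non-network channel. The natural candidate is the inbound broadcast or managed channel, for example a compromised institutional server that feeds the datacast head-end. I would handle this case by appealing to Theorem~\ref{thm:remoteimpossibility}. A breach of $G$ gives the adversary only $\mathcal{A}_r$ capabilities, namely control of network nodes. Every channel of $S_{\text{sovereign}}$ then fails at least one of conditions (i)--(iii) of Definition~\ref{def:remotecompromise}.

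For the inbound channel specifically, Assumption~\ref{asm:auth} means that injected content is rejected unless it is signed. The corollary must therefore be scoped so that the signing infrastructure is not part of $G$. Alternatively, compromise of the signing infrastructure is treated as the supply-chain threat of Section~\ref{sec:malicious} rather than as breach propagation. I expect to make this scoping explicit as a hypothesis; if I do not, the corollary overclaims.

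With that caveat, combining the graph-theoretic step with the channel-reachability step yields the corollary: no element of a breach of $G$ can reach $T$ by any network-mediated means.
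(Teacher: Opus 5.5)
Your first two steps are the paper's proof. It takes as its premise that propagation requires a lateral movement path to $T$, invokes Theorem~\ref{thm:graphisolation}, and notes that breaching any node or set of nodes of $G$ leaves $T \notin V$ unchanged; your breach set $B \subseteq V$ and the union over $A \in B$ just make this explicit. Your third step has no counterpart in the paper's proof, and it changes what is being proved. Because the paper confines ``propagation'' to lateral movement in $G$, its corollary inherits the assumption-minimality of Theorem~\ref{thm:graphisolation}: only ``no network interface'' is needed. The datacast feed is handled elsewhere in the paper (Theorem~\ref{thm:remoteimpossibility}, Section~\ref{sec:malicious}, Limitation~8.2). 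You instead confront a real data flow into $T$ that Definition~\ref{def:networkgraph} defines away, namely a breached backend feeding the inbound channel. You correctly find that this flow is closed only by signature verification. Your version therefore costs Assumptions~\ref{asm:hardware} and~\ref{asm:auth}, plus the hypothesis that the signing key lies outside $G$; without that hypothesis the broadened claim is false, as Limitation~8.2 concedes. What this buys is honesty about where the guarantee actually lives. It also identifies what the managed-IP configuration of Appendix~A would have to rely on: there the terminal keeps a NIC, so $T \in V$ and Theorem~\ref{thm:graphisolation} yields nothing, even though the note under Table~\ref{tab:configs} claims Section~\ref{sec:lateral} covers the Conditional rows. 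One caution: your assertion that a breach of $G$ confers only $\mathcal{A}_r$ capabilities holds only by the stipulation in Definition~2.1. A breached networked display, speaker, or companion device within line of sight of the terminal is an out-of-graph route into the camera or microphone, and Theorem~\ref{thm:remoteimpossibility} rules that out by definition rather than by argument.
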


\begin{proof}
Propagation requires a lateral movement path to $T$. By
Theorem~\ref{thm:graphisolation}, no such path exists. Breach of any node or set
of nodes in $G$ does not change the fact that $T \notin V$.
\end{proof}

\subsection{Corollary: Supply Chain Network Attack Resistance}

\begin{corollary}
\label{cor:supplychain}
A compromised vendor network node cannot reach a Sovereign AI Architecture terminal
via lateral movement.
\end{corollary}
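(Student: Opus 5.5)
The plan is to reduce Corollary~\ref{cor:supplychain} to Theorem~\ref{thm:graphisolation}, in the same way Corollary~\ref{cor:breach} was obtained. The one thing that needs checking is that a compromised vendor node fits the graph model. Either it lies inside the institutional graph, or it lies outside it and can only act on the institution through nodes that do.

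\textbf{Step 1: place the vendor node in the graph.} A vendor or supply chain network node $N$ is a device with a network interface. The adversary $\mathcal{A}_r$ may operate it (Definition~2.1, third capability).

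If $N$ participates in the institutional network, for example through a managed-service VPN or a remote management link, then $N \in V$ by Definition~\ref{def:networkgraph}.

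If $N$ does not participate, I would enlarge the graph to $G' = (V \cup V_{\text{vendor}}, E \cup E_{\text{vendor}})$. This is again a graph of devices with network interfaces and their connections. The proof of Theorem~\ref{thm:graphisolation} uses only the fact that $T$ has no network interface, so it applies unchanged to $G'$ and gives $T \notin V \cup V_{\text{vendor}}$.

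\textbf{Step 2: apply graph isolation.} Take $A = N$. By Theorem~\ref{thm:graphisolation} applied to $G$ or to $G'$, no directed path from $N$ terminates at $T$. Lateral movement is defined as such a path, so no lateral movement from $N$ to $T$ exists. Because the theorem holds for every $A \in V$, it also covers the case where $N$ first compromises other institutional nodes, which is Corollary~\ref{cor:breach}.

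\textbf{Step 3: delimit the claim (the main obstacle).} The real difficulty is not the graph argument. It is ruling out the suspicion that a vendor reaches $T$ by some route other than lateral movement, namely the signed broadcast update path.

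I would handle this by stating explicitly that the corollary concerns lateral movement only. Injection through the inbound channel is not a path in $G$. Under Theorem~\ref{thm:remoteimpossibility} and Assumption~\ref{asm:auth}, such injection requires the signing key. Compromise of the signing infrastructure or of the broadcast operator is classified as a supply chain trust threat (Section~\ref{sec:malicious}, Section~\ref{sec:limitations}), not as network-mediated lateral movement. Likewise, compromise during manufacture is excluded by Assumption~\ref{asm:hardware}.

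With these cases set aside, the corollary follows directly from Theorem~\ref{thm:graphisolation}.
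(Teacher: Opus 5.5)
Your proposal is correct and matches the paper's proof, which treats the compromised vendor node as some $A \in V$ and invokes Theorem~\ref{thm:graphisolation} to conclude that no path can terminate at $T \notin V$. The enlarged graph $G'$ for vendor nodes outside the institutional network, and the explicit exclusion of the signed broadcast path, are sound extra care rather than a different route.
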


\begin{proof}
Vendor network nodes, if compromised, become members of $A \in V$. By
Theorem~\ref{thm:graphisolation}, no path from any $A \in V$ to $T$ exists when
$T \notin V$. Vendor network compromise provides no advantage toward terminal
compromise via network means.
\end{proof}

Supply chain risk is therefore reduced to physical supply chain integrity---hardware,
firmware, and manufacturing---rather than extending to operational network exposure.

\subsection{Worm Propagation Impossibility}

\begin{theorem}[Worm Propagation Impossibility]
\label{thm:worm}
Let $W$ be a self-propagating network worm operating over the institutional network
graph $G = (V, E)$. A worm propagates by traversing directed edges $(u, v) \in E$
where $v$ processes network input received from $u$.

If a terminal $T$ has no network interface, then $T \notin V$ by
Definition~\ref{def:networkgraph}. Because worm propagation requires traversal of
edges within $G$, and because $T$ is not a vertex in $G$, no sequence of
propagation steps can reach $T$. Therefore, a network worm operating within $G$
cannot infect $T$.
\end{theorem}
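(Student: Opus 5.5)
The plan is to reduce the statement to Theorem~\ref{thm:graphisolation} by formalizing worm propagation as a reachability process on $G$ and then running an induction over propagation steps. I will define the infected set at step $k$ as $I_k \subseteq V$. The initial seed $I_0$ consists of nodes compromised by the adversary, so by construction $I_0 \subseteq V$. The update rule is $I_{k+1} = I_k \cup \{v : (u,v) \in E,\ u \in I_k,\ v \text{ vulnerable}\}$. The key step is then to show by induction on $k$ that $I_k \subseteq V$ for all $k$. The base case holds by the seed assumption. For the inductive step, any newly infected $v$ is the head of an edge $(u,v) \in E \subseteq V \times V$, so $v \in V$.

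Next I will combine this invariant with the hypothesis. Since $T$ has no network interface, Definition~\ref{def:networkgraph} gives $T \notin V$, and hence $T \notin I_k$ for every $k$. Equivalently, any $v \in I_k \setminus I_0$ is the endpoint of a directed path from $I_0$ in $G$. Theorem~\ref{thm:graphisolation} already rules out any such path ending at $T$. Finally, $T \notin I_0$, because the seed lies in $V$. This gives the conclusion that $W$ cannot infect $T$. The argument does not depend on whether vertices along the way are vulnerable, so, as with the graph isolation result, it is independent of topology and of the number of compromised nodes.

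The main obstacle is modeling rather than computation. The claim is only as strong as the assumption that \emph{every} infection step of $W$ is an edge of $G$, meaning the worm has no other way to deliver its payload. I would make this explicit by taking the statement's definition literally: propagation means traversing $(u,v) \in E$ with $v$ processing network input from $u$. Under that definition, vectors such as removable media, or a signed broadcast payload sent by a compromised transmitter operator, fall outside ``network worm operating within $G$.'' I would point to Theorem~\ref{thm:remoteimpossibility} and Assumption~\ref{asm:auth} to note that the inbound channel is not an edge of $G$ and cannot be written by $\mathcal{A}_r$. I would also flag that such hybrid vectors belong to the $\mathcal{A}_p$ or supply-chain class rather than being refuted here.
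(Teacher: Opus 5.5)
Your proposal is correct and matches the paper's argument: the paper observes that $T \notin V$ by Definition~\ref{def:networkgraph} and that every propagation step traverses an edge of $G$, so no sequence of steps reaches $T$, and your induction showing $I_k \subseteq V$ (from $E \subseteq V \times V$ and a seed $I_0 \subseteq V$) makes that same observation explicit. The appeal to Theorem~\ref{thm:graphisolation} is redundant once you have the invariant; stating explicitly that every infection step must be an edge of $G$, and that the seed lies in $V$, is a worthwhile clarification of assumptions the paper leaves implicit.
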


This result applies to all known classes of network worm propagation, including
scanning worms (WannaCry~\cite{wannacry}), topological worms
(Slammer~\cite{slammer}), and supply-chain-seeded worms (NotPetya~\cite{notpetya}).
In each case, propagation requires edge traversal within $G$. Terminals outside $G$
are unreachable by construction.

\subsection{Corollary: Botnet Enrollment Impossibility}

\begin{corollary}[Botnet Enrollment Impossibility]
\label{cor:botnet}
Botnet enrollment requires the compromised device to establish a
command-and-control (C2) communication channel with attacker infrastructure. A
Sovereign AI Architecture terminal has no network interface and therefore cannot
establish any outbound network communication. Consequently, it cannot join a botnet,
receive C2 instructions, or participate in coordinated attacker-controlled network
activity of any kind.
\end{corollary}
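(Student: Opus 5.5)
The plan is to argue by contradiction. Suppose the Sovereign AI Architecture terminal $T$ were enrolled in a botnet. Then there would be a command-and-control channel $C$ connecting $T$ to attacker infrastructure. I will make explicit what enrollment requires: a two-way association. First, attacker instructions must reach $T$ over a network, which makes $T$ a target of network traffic from some $A \in V$. Second, $T$ must emit network communication toward the attacker, for registration, beaconing, or taking part in coordinated activity. The C2 edge is therefore a directed edge of the institutional graph $G$ (or of any external network graph treated the same way) incident to $T$, in at least one direction.

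The first step disposes of the inbound half. Any C2 instruction path ending at $T$ is a lateral movement path, or a single edge, terminating at $T$. By Theorem~\ref{thm:graphisolation}, $T \notin V$, so no such path exists. The only other channels that could carry instructions are those enumerated in the proof of Theorem~\ref{thm:remoteimpossibility}. Each of them fails condition (iii) for $\mathcal{A}_r$, and the inbound broadcast channel also fails condition (i) by Assumption~\ref{asm:auth}. So $T$ cannot receive C2 instructions from a remote adversary.

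The second step disposes of the outbound half. Because $T$ has no network interface, it has no device through which to originate a packet onto any network, so it has no outgoing edges either. On the inbound broadcast channel, Theorem~\ref{thm:returnchannel} rules out transmission by any software means in hardware-enforced configurations. For policy-enforced configurations I will state the conclusion conditionally, under correct operation, matching Table~\ref{tab:configs}. The remaining output channel, the display, is optical and line-of-sight only, so it reaches an $\mathcal{A}_p$ observer but not network infrastructure. It follows that $T$ cannot participate in attacker-controlled network activity, since any such activity needs one of these two halves.

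I expect the main obstacle to be making the notion of ``C2 channel'' precise enough that it really does reduce to an edge of $G$. Otherwise a critic can point to covert non-network channels: an optical QR-code relay through a human, or a physical adversary carrying data by hand. I will handle this by explicitly restricting the corollary to \emph{network} C2 established by $\mathcal{A}_r$. Human- or physically-mediated relays will be classified as $\mathcal{A}_p$ activity, consistent with Definition~2.1 and the stated scope, rather than claimed impossible.
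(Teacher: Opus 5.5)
Your proof is correct but more elaborate than the paper's. The paper gives no separate proof. Its justification is the chain written into the statement itself: enrollment needs a C2 channel that the device establishes, a terminal without a network interface cannot originate network traffic, hence there is no C2 channel. The paper afterwards presents this as a consequence of the single fact $T \notin V$.

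Your outbound half contains exactly this argument, and under your own two-way definition of enrollment it already suffices. Note also that $T \notin V$ leaves $T$ with no incident edges of $G$ in either direction. So your split into inbound and outbound halves does real work only for the non-network channels (broadcast receiver, camera, microphone, display).

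Your inbound half gives an honest justification of ``cannot receive C2 instructions''. The paper obtains that clause only through the premise that C2 must be device-established. The cost is that you need Theorem~\ref{thm:remoteimpossibility}, Assumption~\ref{asm:auth} and the restriction to $\mathcal{A}_r$. The paper's route rests on a hardware fact alone, so it applies verbatim to an already-compromised terminal. That is exactly the situation the corollary presupposes, and in it Assumption~\ref{asm:auth} can no longer be relied on, so condition~(iii) alone carries your inbound half.

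Your restriction to network C2, with proximity relays assigned to $\mathcal{A}_p$, is more careful than the paper's ``of any kind''. It belongs in the proof proper, not in a closing remark, for two reasons. First, the display is not the only output: audio output and electromagnetic emanations are the classic air-gap covert channels. Second, the companion device that reads the display by design is itself a networked device. So your claim that the display reaches ``an $\mathcal{A}_p$ observer but not network infrastructure'' holds only by that classification.

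Finally, the policy-enforced rows of Table~\ref{tab:configs} that you treat conditionally are exactly those where the terminal does have a network interface. There $T \in V$ is possible, so Theorem~\ref{thm:graphisolation} is unavailable. Your conclusion for those rows rests on correct operation of the outbound block, plus Assumption~\ref{asm:auth} for the inbound clause.
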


The practical significance of Theorem~\ref{thm:graphisolation}, Theorem~\ref{thm:worm},
and Corollaries~\ref{cor:breach}--\ref{cor:botnet} is that the four dominant
mechanisms of large-scale cyber incident propagation---lateral movement, network
breach propagation, worm infection, and botnet enrollment---are all impossible
against a Sovereign AI Architecture terminal. This follows from a single structural
property: $T \notin V$.

% -----------------------------------------------------------------------
\section{Cryptographic Session and Key Delivery Architecture}
\label{sec:crypto}
% -----------------------------------------------------------------------

\subsection{Session Key Delivery via Out-of-Band Optical Channel}

Each session employs an ephemeral asymmetric keypair $(K_{\text{pub}},
K_{\text{priv}})$ generated fresh per session by a trusted backend system. The
encrypted payload is delivered via the inbound broadcast channel encrypted under
$K_{\text{pub}}$. $K_{\text{priv}}$ is delivered to the terminal via an out-of-band
optical channel---a time-bounded optical signal presented by an authorized companion
device---and held exclusively in volatile RAM for the duration of decryption before
immediate purge.

The companion device is a patient-facing application running on the patient's
personal mobile device. It is a separate networked device outside the terminal's
trust boundary. The terminal emits nothing to it---optical signals flow from
companion to terminal for key delivery, and from terminal to companion for session
output.

\begin{definition}[Out-of-Band Optical Channel]
An out-of-band optical channel is a communication channel that:
\begin{enumerate}[label=(\roman*)]
  \item Operates via optical signal (visible or near-visible light)
  \item Requires line-of-sight physical proximity between sender and receiver
  \item Has no network reachability---$\mathcal{A}_r$ cannot inject or intercept
    content without physical presence
\end{enumerate}
\end{definition}

\subsection{Computational Security of Session Cryptography}

The session cryptography provides computational security under standard hardness
assumptions.

\begin{itemize}
  \item \textbf{Key pair:} ECC NIST P-256 or Curve25519\\
    \emph{Classical security level:} approximately 128 bits (Pollard's rho
    bound~\cite{pollard1978}---the discrete logarithm problem on these curves admits
    a square-root-time attack; 256-bit would be an overclaim)
  \item \textbf{Payload encryption:} AES-256-GCM\\
    (128-bit security under Grover's algorithm~\cite{grover1996} against a quantum
    adversary)
  \item \textbf{Authentication:} HMAC-SHA256 over canonical request body
\end{itemize}

We note explicitly that 128-bit classical security---not 256-bit---is the correct
characterization for P-256 and Curve25519~\cite{bernstein2014,pollard1978}. Against
a cryptographically relevant quantum adversary employing Shor's
algorithm~\cite{shor1997}, ECC provides no security guarantee. Migration to
post-quantum key encapsulation (ML-KEM per FIPS 203~\cite{fips203}) is recommended
for deployments requiring long-term quantum resistance. AES-256-GCM retains 128-bit
security against Grover's algorithm and is considered quantum-resistant at current
key length.

\subsection{Physical Constraint on $K_{\text{priv}}$ Interception}

For $\mathcal{A}_r$ to obtain $K_{\text{priv}}$, the following must hold
simultaneously:
\begin{enumerate}[label=(\roman*)]
  \item Physical presence at the terminal location (line-of-sight to companion
    device)
  \item Optical capture capability within capture range
  \item Capture within the time-bounded expiry window $T_{\text{exp}}$
  \item Prior targeting knowledge of the specific session
\end{enumerate}

The conjunction of (i)--(iv) defines an $\mathcal{A}_p$ attack. This represents a
categorical threat model reduction: a global network adversary is reduced to a
locally-present, time-constrained physical adversary. Line-of-sight physical attacks
on the optical channel---including telephoto lens capture, drone cameras, and
reflection-based capture---remain within the $\mathcal{A}_p$ class and are addressed
by physical security controls at the deployment location.

A remote adversary who has previously compromised the companion device may attempt
to use it as a proxy---reading the optical signal via the device camera and
exfiltrating $K_{\text{priv}}$ over the companion device's network connection. This
attack requires prior targeted compromise of a specific patient's personal device and
physical presence of that device at the terminal, placing it firmly within the
$\mathcal{A}_p$ class; it does not constitute a remote network attack under
Definition~2.1, but it does underscore that the companion device is an attack surface
whose integrity is outside the terminal's trust boundary.

\subsection{Key Lifecycle and Volatile Memory Confinement}

$K_{\text{priv}}$ is confined to volatile RAM from optical reception to
post-decryption purge:
\[
  t_{\text{exist}}(K_{\text{priv}}) = [t_{\text{reception}},\; t_{\text{purge}}]
\]
where $t_{\text{purge}}$ follows $t_{\text{decryption}}$ by at most one memory
management cycle. $K_{\text{priv}}$ is never written to persistent storage under
any operating condition. This property prevents forensic recovery of
$K_{\text{priv}}$ after session termination.

\subsection{Outbound Data via Optical Channel}

Session output is transmitted from the terminal to the companion device via an
optical channel displayed on the terminal screen. Interception requires
line-of-sight physical presence and optical capture within the display window---an
$\mathcal{A}_p$ attack with no network path.

\subsection{Complete Out-of-Band Session Chain}

The complete session data flow involves no network-accessible data in decryptable
form at any stage:

\begin{description}
  \item[Stage 1:] Encrypted payload $\to$ broadcast/managed inbound channel
    (one-way) $\to$ terminal.\\
    \emph{[Encrypted under $K_{\text{pub}}$; computationally indistinguishable from
    random without $K_{\text{priv}}$]}
  \item[Stage 2:] $K_{\text{priv}}$ $\to$ optical channel (time-bounded
    $T_{\text{exp}}$, line-of-sight only) $\to$ terminal.\\
    \emph{[$K_{\text{priv}}$ never on any network; not reachable by $\mathcal{A}_r$]}
  \item[Stage 3:] Decryption $\to$ volatile RAM only $\to$ $K_{\text{priv}}$
    purged immediately.\\
    \emph{[No persistent storage under any operating condition]}
  \item[Stage 4:] Session output $\to$ optical channel $\to$ companion device.\\
    \emph{[Not transmitted on any network; interception requires physical presence]}
\end{description}

At no stage does decryptable sensitive data traverse a network interface.

\subsection{Cross-Domain Applicability of Session Architecture}

The out-of-band optical key delivery architecture is not specific to any single
regulated domain:

\paragraph{Healthcare.} Patient identity and clinical data never traverse a network
in decryptable form. Re-identification at the terminal is architecturally impossible
without physical presence and possession of the time-bounded optical key.

\paragraph{Financial Services.} Transaction authorization keys delivered out-of-band
cannot be intercepted by network-layer adversaries. Man-in-the-middle attacks on the
key delivery channel require physical presence.

\paragraph{Defense.} Session keys delivered via out-of-band optical channel cannot
be exfiltrated via network compromise. Physical access controls governing the
terminal location become the primary security perimeter.

% -----------------------------------------------------------------------
\section{Composite Threat Model Reduction}
\label{sec:composite}
% -----------------------------------------------------------------------

\subsection{Summary of Eliminated Attack Classes}

The Sovereign AI Architecture eliminates the following attack classes by
construction. These eliminations are structural---they do not depend on the quality
of security controls, configuration, or monitoring:

\begin{table}[ht]
\centering
\caption{Eliminated attack classes vs.\ connected terminal}
\begin{tabularx}{\textwidth}{@{}Xll@{}}
\toprule
\textbf{Attack Class} & \textbf{Connected Terminal} & \textbf{Sovereign AI Arch.} \\
\midrule
Remote code execution & Possible via NIC & Eliminated ($S_r = \emptyset$) \\
Network scanning & Terminal visible & Eliminated ($T \notin V$) \\
Lateral movement & Possible via $G$ & Eliminated ($T \notin V$) \\
Worm propagation & Possible & Eliminated (Thm.~\ref{thm:worm}) \\
Botnet enrollment & Possible & Eliminated (Cor.~\ref{cor:botnet}) \\
Man-in-the-middle & Possible & Eliminated (no path) \\
Remote cred.\ theft & Possible & Eliminated (no svc) \\
C2 callback & Possible & Eliminated (no out) \\
Network-based APT & Possible & Eliminated ($S_r = \emptyset$) \\
Vendor network pivot & Possible & Eliminated (Cor.~\ref{cor:supplychain}) \\
\bottomrule
\end{tabularx}
\end{table}

\begin{table}[ht]
\centering
\caption{Residual attack classes requiring $\mathcal{A}_p$}
\begin{tabularx}{\textwidth}{@{}Xl@{}}
\toprule
\textbf{Attack Class} & \textbf{Mitigation} \\
\midrule
Optical $K_{\text{priv}}$ capture & Time-bounded $T_{\text{exp}}$, physical security \\
Optical output capture & Physical security, display orientation \\
USB device injection & VID/PID/class whitelist + attestation \\
Chassis intrusion & Tamper detection, volatile wipe \\
RF injection near antenna & Physical security perimeter \\
Broadcast injection & Signature verification (Assumption~\ref{asm:auth}) \\
\bottomrule
\end{tabularx}
\end{table}

\subsection{Formal Threat Model Reduction Statement}

\begin{theorem}[Threat Model Reduction]
\label{thm:reduction}
Under Assumptions~\ref{asm:hardware}--\ref{asm:physical}, a Sovereign AI
Architecture with a receiver-side unidirectional inbound channel satisfying
Definition~\ref{def:unidirectionality} reduces the adversary threat model from
$\{\mathcal{A}_r, \mathcal{A}_p\}$ to $\{\mathcal{A}_p\}$ for all
network-mediated attack classes.
\end{theorem}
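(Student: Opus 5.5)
The plan is to prove Theorem~\ref{thm:reduction} as a composition of the earlier structural results. First, I will fix what ``network-mediated attack class'' means. I will say an attack is network-mediated if its execution requires either (a) a channel with $r = \text{remote}$ through which the adversary writes content that the terminal processes, or (b) a directed path in $G$ ending at the terminal. This gives the outbound/C2 variant and the lateral or worm variant. With this definition, the claim splits into two parts. \emph{Exclusion} says $\mathcal{A}_r$ can realize no such attack. \emph{Residual} says every remaining attack in these classes needs a capability that only $\mathcal{A}_p$ has.

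For exclusion, I will split into the two routes. For route (a), I will invoke Theorem~\ref{thm:surfaceelim} to get $S_{\text{sovereign},r} = \emptyset$. I will then strengthen this with Theorem~\ref{thm:remoteimpossibility}, which shows that no channel satisfies conditions (i)--(iii) of Definition~\ref{def:remotecompromise} simultaneously. Assumption~\ref{asm:auth} disposes of condition (i) on the inbound channel. For the unidirectionality of the inbound channel I will cite Theorem~\ref{thm:returnchannel} in the hardware tier, and Definition~\ref{def:unidirectionality} with its ``Conditional'' reading in the policy tier. For route (b), I will invoke Theorem~\ref{thm:graphisolation} ($T \notin V$), together with Theorem~\ref{thm:worm} and Corollaries~\ref{cor:breach}, \ref{cor:supplychain} and \ref{cor:botnet}. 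These cover lateral movement, worms, vendor pivots and C2 enrollment. For session confidentiality, meaning a MITM or eavesdropper on the payload, I will use Assumption~\ref{asm:crypto} together with the Stage 1--4 chain of Section~\ref{sec:crypto}. Stage 1 ciphertext is computationally indistinguishable from random without $K_{\text{priv}}$, and $K_{\text{priv}}$ appears only on the optical channel, which by definition is not remotely reachable. Hence $\mathcal{A}_r$ gains at most negligible advantage.

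For the residual part, I will go row by row through the residual table. For each row I will check that the required action is one of those listed for $\mathcal{A}_p$ but forbidden to $\mathcal{A}_r$: optical presentation or capture, USB presentation, chassis access, or RF transmission. This shows that the residual threat is contained in $\{\mathcal{A}_p\}$. Assumption~\ref{asm:physical} is cited only to note that this class is bounded by the deployment perimeter, not to eliminate it. Assumption~\ref{asm:hardware} is needed so that the ``no network interface'' and ``no transmit circuitry'' premises actually hold for the fielded device.

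The main obstacle is the policy-enforced tier together with the completeness of the case split. In the ``Conditional'' configurations, a remote adversary could in principle re-enable outbound traffic. I will handle this by making the theorem explicitly conditional on Definition~\ref{def:unidirectionality} holding, that is, on correct operation. I will argue, as in Channel 5.1 of Theorem~\ref{thm:remoteimpossibility}, that subverting the policy itself requires physical access or supply-chain compromise, and the latter is excluded by Assumption~\ref{asm:hardware}. I also have to justify that routes (a) and (b) exhaust the network-mediated classes. I will do this by making the exhaustiveness definitional: any network-mediated attack must either deliver input over some interface, which is route (a), or exploit the terminal's network position, which is route (b). I will state plainly that the theorem's scope is exactly this definition.
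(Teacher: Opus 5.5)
Your skeleton matches the paper's proof, which is just the chain Theorem~\ref{thm:surfaceelim}, Theorem~\ref{thm:remoteimpossibility}, Theorem~\ref{thm:graphisolation}, Theorem~\ref{thm:worm}, Corollary~\ref{cor:botnet} and the Stage~1--4 chain of Section~\ref{sec:crypto}. Your definition of ``network-mediated'' and your row-by-row residual check are scaffolding the paper leaves implicit. The gap is in the step you single out as the crux: the policy-enforced (Conditional) tier.

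Making the theorem conditional on correct operation is right, and it matches Appendix~\ref{app:configurations}. Discharging that condition via Assumption~\ref{asm:hardware} is not. That assumption concerns hardware, enclave and firmware integrity at manufacture, not runtime kernel and firewall configuration. In Configuration~3 the terminal has a standard NIC on an IP segment, and Definition~2.1 lets $\mathcal{A}_r$ inject traffic on any network path. So a crafted inbound packet against the driver or network stack can re-enable outbound traffic with neither physical access nor supply-chain compromise. The appendix itself says policy correctness must be treated as an additional assumption alongside Assumptions~\ref{asm:hardware}--\ref{asm:physical}.

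The Channel~5.1 reasoning you borrow conflates inbound injection with outbound re-enablement. Injecting into a managed-IP inbound channel needs neither RF equipment nor policy compromise. There the inbound interface has $r=\text{remote}$, so $S_{\text{sovereign},r}\neq\emptyset$ and condition~(iii) holds. Only condition~(i), via Assumption~\ref{asm:auth}, survives.

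More seriously, your route~(b) fails in that tier. A terminal with a NIC on an institutional VLAN is a device with a network interface participating in the institutional network, so $T\in V$ by Definition~\ref{def:networkgraph}. The hypothesis of Theorem~\ref{thm:graphisolation} is then false, and the worm, breach and vendor-pivot results you derive from $T\notin V$ go with it. Definition~\ref{def:unidirectionality} removes only $T$'s out-edges, never its in-edges, so correct outbound policy cannot stand in for $T\notin V$.

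To cover the Conditional rows you would have to rebuild route~(b) as follows:
\begin{itemize}
\item Extend Assumption~\ref{asm:auth} to the managed channel and apply it to every in-edge $(u,T)$: no unauthenticated input is processed, so no edge into $T$ is exploitable.
\item Use outbound blocking to remove the out-edges needed for C2 and onward propagation.
\end{itemize}
This requires ``verified prior to any processing'' to cover the NIC driver and kernel stack. That is an implementation-correctness premise the paper otherwise disclaims (Limitation~8.3), so you must state it explicitly. The alternative is to restrict the theorem to configurations where the terminal genuinely has no network interface. The paper's own proof relies on $T\notin V$ here just as silently.

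A smaller mismatch: your clause~(a) is inbound injection, not an ``outbound/C2 variant''. Neither clause covers passive interception or outbound C2, yet you treat both via Section~\ref{sec:crypto} and Corollary~\ref{cor:botnet}. Add a third clause so your exhaustiveness claim reaches the man-in-the-middle, credential-theft and C2-callback rows of the eliminated-classes table.
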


\begin{proof}
By Theorem~\ref{thm:surfaceelim}, $S_{\text{sovereign},r} = \emptyset$. By
Theorem~\ref{thm:remoteimpossibility}, no channel satisfying the remote compromise
condition exists for $\mathcal{A}_r$. By Theorem~\ref{thm:graphisolation}, no
lateral movement path to $T$ exists in $G$. By Theorem~\ref{thm:worm}, worm
propagation cannot reach $T$. By Corollary~\ref{cor:botnet}, botnet enrollment is
impossible. By Section~\ref{sec:crypto}, all session data flows are confined to
optical channels requiring physical presence. Therefore all attack classes requiring
only $\mathcal{A}_r$ capabilities are eliminated. The residual attack surface
requires $\mathcal{A}_p$ capabilities.
\end{proof}

\[
  A_{\text{elim}} = \{\text{all network-mediated attack classes}\}
\]

The eliminated set constitutes the dominant attack surface for connected terminals
in regulated environments~\cite{hipaajournal2024,ponemon2023,verizonDBIR2023,cisaAPT2022,wannacry,notpetya}.

\subsection{Residual Physical Attack Surface}
\label{sec:residual}

The residual attack surface requires $\mathcal{A}_p$. This is the irreducible
minimum for any deployed physical system:
\begin{align*}
  R_{\text{sovereign}} &= R_{\text{physical}} \quad (\text{irreducible minimum})\\
  R_{\text{connected}} &= R_{\text{physical}} + R_{\text{network}}\\
  \Delta R &= R_{\text{network}} > 0
\end{align*}

\subsection{Limitations}
\label{sec:limitations}

\paragraph{Limitation 8.1 --- Hardware Trust.}
Assumption~\ref{asm:hardware} bounds the proof. A compromised secure enclave, BIOS,
or hardware accelerator introduced during manufacturing could undermine key storage
guarantees. This is the architecture's most load-bearing unexamined assumption: the
entire threat model reduction holds only if the hardware root of trust was not
subverted before deployment. The trust chain terminates at a point that the
architecture itself cannot verify at runtime.

Candidate mitigations operate at manufacture and provisioning time rather than at
runtime. TPM 2.0 remote attestation~\cite{tpm2} allows a verifier to confirm that
the device is running known-good firmware and that key material was generated within
a genuine enclave. ARM TrustZone measured boot~\cite{armtrustzone} extends this to
the full boot chain. NIST SP 800-193 provides a framework for detection and recovery
from firmware compromise. Collectively these mechanisms reduce supply chain risk but
do not eliminate it. Full treatment of hardware supply chain integrity is outside the
scope of this paper.

\paragraph{Limitation 8.2 --- Broadcast Transmitter Trust.}
The broadcast transmitter operator is a trusted party. Compromise of the signing key
would allow malicious firmware injection. This is mitigated by Assumption~\ref{asm:auth}
and key management practices outside the scope of this paper.

\paragraph{Limitation 8.3 --- Implementation Correctness.}
This paper proves architectural properties, not implementation correctness. Software
vulnerabilities in the broadcast decoder, key management layer, or inference engine
are addressed by secure development lifecycle practices~\cite{nistSSDF,iec62304}.

\paragraph{Limitation 8.4 --- Quantum Adversary.}
ECC provides no security guarantee against a cryptographically relevant quantum
adversary. AES-256-GCM retains 128-bit security under Grover's algorithm. Migration
to ML-KEM (FIPS 203~\cite{fips203}) is recommended for long-term quantum
resistance.

\paragraph{Limitation 8.5 --- USB Whitelisting.}
VID/PID matching alone is insufficient against a determined physical adversary with
device-spoofing capability. VID/PID plus device class matching provides additional
resistance; cryptographic device attestation should be considered for high-assurance
deployments.

\paragraph{Limitation 8.6 --- Revocation Latency.}
The absence of a return channel eliminates pull-based revocation mechanisms. A
terminal cannot query a Certificate Revocation List (CRL) or receive an emergency
kill signal on demand. Revocation of a compromised terminal Device ID is
push-based: a signed revocation payload must be queued and delivered via the
broadcast channel at the next available broadcast cycle. The latency between
compromise detection and confirmed revocation delivery is therefore bounded by
broadcast cycle frequency and operator response time. For deployments where rapid
revocation is operationally critical, broadcast cycle frequency should be sized
accordingly, and the revocation latency window should be treated as a residual risk
alongside the physical attack surface.

% -----------------------------------------------------------------------
\section{Discussion}
\label{sec:discussion}
% -----------------------------------------------------------------------

\subsection{Prototype Implementation}

A reference implementation of the described architecture has been developed and
validated across the following components:

\paragraph{Inbound Data Channel.} Satellite broadcast (DVB-S2) receive-only
terminal. Addressed IP payloads delivered via GSE stream addressing to a specific
terminal Device ID. No return channel present in hardware or software configuration.
Receive-only VSAT downlink hardware contains no transmit circuitry on the downlink
frequency.

\paragraph{On-Device AI Inference.} Quantized large language model (2B parameter
class, INT4 quantization) executing via \texttt{llama.cpp} on a dedicated neural
processing unit (10+ TOPS). Full inference pipeline runs locally with no cloud
dependency. Speech transcription via \texttt{whisper.cpp}. Computer vision pipeline
via MediaPipe. Inference latency target: sub-3 seconds per session on target SoC.

\paragraph{Out-of-Band Key Exchange.} Ephemeral asymmetric keypair (ECC P-256)
generated per session. $K_{\text{priv}}$ delivered via optical channel with
time-bounded TTL. $K_{\text{priv}}$ confined to volatile RAM, purged immediately
post-decryption. Session output delivered via optical channel.

\paragraph{USB Hardware Whitelisting.} Kernel-level eBPF hook enforcing
VID/PID/device-class whitelist. Only approved peripherals enumerated by OS.

\paragraph{Secure Enclave.} Hardware-bound root key burned at manufacture. Key
derivation executing within enclave. Key material never written to persistent
storage.

The prototype demonstrates the architectural claims of this paper in a functional
implementation: no network interface is present or required during operation, all
data flows are out-of-band optical or receive-only broadcast, and full AI inference
executes on-device. Empirical characterization of latency distributions, throughput
under load, and failure modes under production conditions is left to future work.

A consequential property of on-device inference is graceful degradation under
broadcast unavailability. The terminal does not require a live connection to the deployment
backend to conduct a session. The clinical model, inference pipeline, and firmware
are resident on the device and persist across broadcast cycles. What does require a
fresh broadcast per session is the encrypted session payload---the per-patient data
addressed to a specific terminal Device ID. If the broadcast is unavailable at
session time, the terminal retains full inference capability but cannot receive new
session payloads until the broadcast resumes.

\subsection{Broadcast Infrastructure for Sovereign AI Deployment}

Any broadcast medium satisfying Definition~\ref{def:unidirectionality} serves as
sovereign inbound infrastructure. A single satellite footprint covers entire
continents; a terrestrial tower covers a metropolitan area containing potentially
thousands of deployed terminals. The broadcast channel simultaneously delivers
addressed payloads to individual terminals and fleet-wide firmware updates with zero
network infrastructure dependency at the receiving end.

Adding terminals to a sovereign deployment adds zero network complexity at the
receiving end. There are no firewall rules to configure, no VLANs to provision, no
IT tickets to file at the terminal site. The terminal requires power and an antenna
feed. Backend operations---Device ID provisioning, GSE stream addressing, and session
key issuance---are one-time per-terminal setup steps handled centrally. The broadcast
layer handles distribution.

A practical constraint on model update delivery is broadcast throughput relative to
model size. A 2B parameter INT4-quantized model ($\sim$1 GB) requires approximately
2.5 minutes at typical satellite or terrestrial broadcast data rates. A 70B parameter
model at equivalent quantization ($\sim$35 GB) requires substantially more channel
time. For large model classes, delta-updates and quantization-aware patching are the
standard operational path.

Centralized virtual terminals minimize operational maintenance cost but introduce
systemic risk: compromise of the central infrastructure propagates to the entire
deployed fleet simultaneously. Real-world incidents including the 2024 Change Healthcare ransomware attack~\cite{hhs2024change},
the Ascension Health network shutdown~\cite{ascension2024}, and the CommonSpirit
Health incident~\cite{commonspirit2022} demonstrate that this is not a theoretical concern. The
Sovereign Broadcast Architecture makes the deliberate opposite trade-off: higher
per-terminal autonomy in exchange for elimination of fleet-wide compromise pathways.

\subsection{Broadcast Zero-Trust Edge Computing}

We propose \emph{Broadcast Zero-Trust Edge Computing} as a generalizable
architecture pattern defined by four properties:

\begin{enumerate}
  \item \textbf{Sovereign Compute:} All AI inference executes on-device. No cloud
    dependency exists for any inference operation.
  \item \textbf{Unidirectional Inbound Channel:} All inbound data arrives via a
    channel satisfying Definition~\ref{def:unidirectionality}. Enforcement is
    hardware-based in broadcast and satellite configurations, policy-based in managed
    IP configurations.
  \item \textbf{Optical Identity:} Session key material and output data are exchanged
    via out-of-band optical channels requiring physical presence. No network path
    exists for identity or session data.
  \item \textbf{Network Attack Surface Elimination:} The terminal has no network
    interface. $S_r = \emptyset$. $T \notin V$.
\end{enumerate}

This pattern extends the Zero Trust principle~\cite{nistZTA,kindervag2010} to its
logical limit. Zero Trust assumes network presence and works backward to verify every
transaction~\cite{ward2014}. Broadcast Zero-Trust eliminates network presence as a
design primitive---there are no transactions to verify because there is no network.
The security guarantee is not stronger policy. It is the absence of the policy
surface entirely.

\begin{quote}
\emph{Design Principle --- Isolation by Elimination of Return Channels:}
A central design principle emerging from this architecture is that the most reliable
way to eliminate remote attack surface is not to secure bidirectional communication
channels, but to eliminate them entirely. By restricting external input to
authenticated broadcast and removing all return paths from the terminal to external
networks, the system converts remote compromise from a probabilistic security problem
into a structural impossibility under the defined threat model. Operational
observability and fleet management are then derived from side-effects of normal
operation rather than from dedicated management channels.

In other words: \emph{the safest remote interface is no remote interface.}
\end{quote}

\subsection{Relationship to Prior High-Assurance System Work}

Data diode architectures~\cite{tuptuk2018,waterfall2022} enforce unidirectionality
at the hardware level for critical infrastructure. Our contribution extends this to
AI inference terminals and formalizes receiver-side unidirectionality as a deployment
configuration property across broadcast, satellite, and managed IP configurations.

The seL4 verified microkernel~\cite{sel4} provides formal proof of OS isolation
properties at the software level. Our argument operates at the network architecture
level and is complementary---seL4-class verification could strengthen
Limitation~8.3. Intel SGX~\cite{intelsgx} and Google Titan~\cite{titansecurity}
provide hardware root of trust for computation, directly relevant to
Assumption~\ref{asm:hardware}. NSA CNSA 2.0~\cite{cnsa2} provides algorithm
selection guidance for national security systems including post-quantum migration
timelines. The Howard/Wing attack surface metric~\cite{howard2003,manadhata2011}
provides the structured attack surface framework adopted in
Section~\ref{sec:attacksurface}. BeyondCorp~\cite{ward2014} represents the most
prominent industrial deployment of Zero Trust principles.

\subsection{Cross-Domain Applicability}

The threat model reduction argument is domain-agnostic. The properties that make a
Sovereign AI Architecture secure in a clinical environment are identical to those
that make it secure on a trading floor or in a classified facility:

\begin{itemize}
  \item No network interface $\Rightarrow T \notin V \Rightarrow$ no lateral
    movement path
  \item Receive-only broadcast channel $\Rightarrow S_r = \emptyset \Rightarrow$ no
    remote attack vector
  \item Out-of-band optical key delivery $\Rightarrow$ physical presence required
    for any session interception
  \item $T \notin V \Rightarrow$ worm propagation and botnet enrollment impossible
\end{itemize}

The regulatory compliance implications differ by domain
(Section~\ref{sec:lateral}.3) but the underlying security argument is identical.

\subsection{Fleet Health Monitoring via Session Lifecycle Events}

The architecture requires no dedicated health monitoring channel. Terminal liveness
and session completion are observable as a natural side-effect of normal operation.
When a user initiates a session, the companion device scans the session-start optical
code, and the terminal Device ID and precise timestamp are forwarded to the
deployment backend. When the session concludes, the companion device scans the
session-end code and forwards the same Device ID, timestamp, and session duration.
The backend receives two signals per session---session-start and session-end---with
no modification to the terminal, no new QR payload, and no new channel.

Each terminal exhibits a characteristic session cadence reflecting the usage profile
of its deployment environment. The fleet backend learns this pattern per terminal and
flags anomalies statistically. No operator-defined thresholds are required.

\paragraph{Operational Response Model.}

When anomalous inactivity is detected, the fleet backend notifies the operator with
the terminal Device ID, deployment location, last known session timestamp, and
inferred anomaly type. An authorized operator selects from three response tiers
ordered by escalation cost.

\begin{description}
  \item[Tier 1 --- Remote Restart and Reprovision.] The operator issues restart or
    firmware update commands from the Customer Service App without leaving the
    operations center. Commands are queued and delivered via the authenticated
    broadcast channel at the next available broadcast cycle. Delivery confirmation
    is indicated by a subsequent session-start signal.

  \item[Tier 2 --- In-Person Troubleshoot.] The operator or facility staff visits
 the terminal with the Customer Service App and issues a troubleshoot command
 through the authenticated broadcast channel. The terminal executes a diagnostic
    routine and displays the results---firmware version, model version, peripheral
    status, error codes, broadcast signal strength, storage health, and last known
    operational state---on its screen. Physical presence is required to read the
    diagnostic display. The results cannot be transmitted remotely.

  \item[Tier 3 --- Onsite Service Visit.] A field technician is dispatched with a
    pre-imaged replacement unit. The failed unit is swapped and returned for forensic
    analysis. Tier 3 is the irreducible minimum for hardware failure in any deployed
    physical system.
\end{description}

\paragraph{Limitations of Session-Lifecycle Telemetry.}

This model has inherent constraints. First, observability is user-gated: a terminal
with no active users generates no session signals. Second, the model detects absence
but not cause. Third, delivery of restart, firmware, and troubleshoot commands via
broadcast is not acknowledged. Fourth, the companion device is a dependency in the
relay chain---a lost or malfunctioning companion device will suppress session signals
from an otherwise healthy terminal.

No remote access to the terminal exists or is required at any point in this
procedure. Restart, firmware, and troubleshoot commands all transit the same
authenticated broadcast inbound channel already present in the architecture.
Operational visibility is a consequence of the session lifecycle, not an addition
to it.

% -----------------------------------------------------------------------
\section{Conclusion}
\label{sec:conclusion}
% -----------------------------------------------------------------------

We have presented a formal argument under a defined adversary model establishing
that a Sovereign AI Architecture in which inbound data is delivered via a physically 
unidirectional channel—implemented using receive-only broadcast infrastructure or 
certified hardware data diodes—with no return path to any external network
satisfying Definition~\ref{def:unidirectionality} eliminates all network-mediated
compromise vectors. The argument is established across four complementary
frameworks---attack surface decomposition, channel reachability analysis, graph
isolation, and cryptographic session architecture---each independently sufficient and
mutually reinforcing.

The core results are:

\begin{enumerate}
  \item \textbf{Remote attack surface elimination (Theorem~\ref{thm:surfaceelim}):}
    $S_{\text{sovereign},r} = \emptyset$---no interface reachable by $\mathcal{A}_r$
    exists in a Sovereign AI Architecture.

  \item \textbf{Remote compromise impossibility (Theorem~\ref{thm:remoteimpossibility}):}
    Under the remote adversary model and authenticated inbound channel assumption,
    no channel satisfying the remote compromise condition exists for $\mathcal{A}_r$.

  \item \textbf{Lateral movement impossibility (Theorem~\ref{thm:graphisolation}):}
    $T \notin V$---a terminal with no network interface has no vertex in the
    institutional network graph and cannot be reached by any lateral movement path.
    This is the strongest and most assumption-minimal result in the paper.

  \item \textbf{Worm propagation impossibility (Theorem~\ref{thm:worm}):}
    Self-propagating network malware cannot reach a terminal that is not a vertex in
    the institutional network graph. This applies to all known propagation
    mechanisms including scanning, topological, and supply-chain-seeded worms.

  \item \textbf{Botnet enrollment impossibility (Corollary~\ref{cor:botnet}):}
    A terminal with no outbound network interface cannot establish C2 communication
    and cannot be enrolled in a botnet under any circumstance.

  \item \textbf{Physical threat model reduction (Theorem~\ref{thm:reduction}):}
    The adversary threat model is reduced from $\{\mathcal{A}_r, \mathcal{A}_p\}$
    to $\{\mathcal{A}_p\}$---from a global network adversary to a locally-present
    physical adversary---for all network-mediated attack classes.

  \item \textbf{Receiver-side unidirectionality (Theorem~\ref{thm:returnchannel}):}
    Under hardware-enforced configurations, the inbound channel cannot transmit by
    any software means on the receiver device. The security argument holds under
    either enforcement tier.

  \item \textbf{Broadcast Zero-Trust Edge Computing:} A generalizable architecture
    pattern extending the Zero Trust principle to its logical limit---eliminating
    network presence rather than verifying it.A corollary design principle: the most 
    reliable way to eliminate remote attack surface is not to secure bidirectional 
    channels but to eliminate them entirely, converting remote compromise from a 
    probabilistic problem into a structural impossibility.

  \item \textbf{Fleet observability without new channels:} Terminal health and
    operational status are fully inferable from session lifecycle signals already
    present in the architecture. The attack surface is unchanged.
\end{enumerate}

We have been careful to distinguish architectural guarantees from implementation
guarantees, computational security from information-theoretic security, and structural
properties from probabilistic claims. The argument holds under clearly stated
assumptions, and its limitations are explicitly identified.

The practical conclusion for regulated environment deployment is precise: an
institution deploying a Sovereign AI terminal does not need to trust its own network
security posture to protect the terminal. The terminal is outside the network graph.
Its security does not degrade when the institutional network is breached---because
the institutional network has no path to it.

The four dominant mechanisms of large-scale cyber incident propagation---lateral
movement, network breach propagation, worm infection, and botnet enrollment---are all
impossible against this architecture. They follow from a single structural property:
$T \notin V$.

The architecture does not harden against network attacks. It eliminates the network
attack class.

% -----------------------------------------------------------------------
% Bibliography
% -----------------------------------------------------------------------

\appendix

\section{Deployment Configuration Reference Implementations}
\label{app:configurations}

This appendix describes three concrete deployment configurations satisfying Definition~4.1 (receiver-side unidirectionality). The formal security argument of Sections~3--8 is independent of which configuration is chosen; configurations differ in how strongly they enforce unidirectionality and in their operational characteristics. Table~\ref{tab:config-comparison} summarizes all three.

\subsection{Configuration 1: Physically Unidirectional Broadcast Terminal}

\paragraph{Overview.}
A receive-only broadcast or satellite terminal receives data via licensed radio-frequency spectrum on a downlink frequency for which the device contains no transmit circuitry. Unidirectionality is enforced by the physical absence of transmit-capable components, not by software policy or configuration. This is the highest-assurance hardware configuration.

\paragraph{Reference implementation.}
The prototype described in Section~9.1 uses a DVB-S2 receive-only VSAT downlink terminal. Addressed IP payloads are delivered via Generic Stream Encapsulation (GSE) over a geostationary satellite link. The terminal's Device ID is embedded in the GSE stream label; the terminal processes only streams addressed to its Device ID or the broadcast group address.

\begin{itemize}
  \item \textbf{Hardware:} DVB-S2/S2X receive-only VSAT downlink receiver
  \item \textbf{Encapsulation:} GSE per ETSI EN 302 307
  \item \textbf{Addressing:} per-terminal Device ID in GSE stream label; broadcast group address for fleet-wide payloads
  \item \textbf{Transmit circuitry:} absent on downlink frequency---hardware enforcement
  \item \textbf{Return channel:} none
  \item \textbf{Equivalent standards:} DVB-T2 (terrestrial, Europe/Asia/Africa), ATSC 3.0 / NextGen TV (terrestrial, North America/South Korea) [5, 6, 17], ISDB-T (Japan/South America), DAB+ and HD Radio (radio datacasting)
\end{itemize}

\paragraph{Enforcement tier.}
Hardware-enforced. Software on the receiver device cannot provision transmit capability on the downlink frequency because no transmit hardware exists at that frequency. Satisfies Definition~4.1 unconditionally. Corresponds to the \emph{Yes} rows of Table~1.

\begin{table}[h]
\centering
\begin{tabular}{ll}
\hline
\textbf{Property} & \textbf{Value} \\
\hline
Unidirectionality enforcement & Hardware (transmit circuitry absent) \\
Definition 4.1 satisfied & Yes---unconditionally \\
Typical downlink throughput & 20--100\,Mbps (DVB-S2X ACM) \\
Coverage footprint & Continental (GEO) or global (LEO) \\
Terminal count scalability & Unlimited---broadcast is one-to-many \\
Network infrastructure at receiver & None---power and antenna feed only \\
Return channel for fleet management & None---push-only via broadcast \\
Model update delivery (${\sim}1$\,GB) & ${\sim}2.5$\,min at 50\,Mbps \\
\hline
\end{tabular}
\caption{Configuration 1 operational characteristics.}
\label{tab:config1}
\end{table}

\subsection{Configuration 2: Hardware Data Diode}

\paragraph{Overview.}
A hardware data diode is a certified unidirectional security gateway enforcing one-way data flow at the hardware level between a source network and a destination enclave. Unlike broadcast configurations, the data diode operates on a point-to-point or hub-to-terminal managed network segment. Unidirectionality is enforced by certified hardware and is independent of software correctness on the terminal.

\paragraph{Reference architecture.}
A backend data feed is delivered from a source enclave through a certified hardware data diode to the terminal's inbound interface. The terminal has no network interface other than the diode receive port; the diode's hardware design prevents any signal from propagating from the terminal back toward the source network.

\begin{itemize}
  \item \textbf{Gateway device:} certified hardware data diode (e.g., Owl Cyber Defense, Waterfall Security)
  \item \textbf{Certification:} Common Criteria EAL4+ or equivalent national scheme
  \item \textbf{Topology:} source enclave $\rightarrow$ [data diode] $\rightarrow$ terminal receive port
  \item \textbf{Payload authentication:} cryptographic signature verification required (Assumption~2.3)
  \item \textbf{Return channel:} physically impossible toward source network; out-of-band optical channel for session output (Section~7.5)
\end{itemize}

\paragraph{Enforcement tier.}
Hardware-enforced at the diode boundary, independent of software on either the source or receiving system. Satisfies Definition~4.1 unconditionally. Corresponds to the \emph{Yes (hardware data diode)} row of Table~1.

\begin{table}[h]
\centering
\begin{tabular}{ll}
\hline
\textbf{Property} & \textbf{Value} \\
\hline
Unidirectionality enforcement & Hardware (certified data diode) \\
Definition 4.1 satisfied & Yes---unconditionally \\
Typical throughput & Up to 10\,Gbps (device-dependent) \\
Coverage & Point-to-point or hub-and-spoke \\
Network infrastructure at receiver & Diode receive port only---no bidirectional NIC \\
Return channel for fleet management & None toward source network \\
Primary deployment context & Critical infrastructure, classified enclaves \\
\hline
\end{tabular}
\caption{Configuration 2 operational characteristics.}
\label{tab:config2}
\end{table}

\subsection{Configuration 3: Managed IP with Outbound Disabled}

\paragraph{Overview.}
The terminal connects to a managed IP network segment on which outbound traffic is disabled by kernel-level and perimeter firewall policy. The terminal retains a bidirectional-capable network interface, but the outbound path is blocked at the OS level (e.g., \texttt{iptables} DROP on the OUTPUT chain) and enforced at the network perimeter. This is a policy-enforced configuration: the return channel is absent under correct operation, but the hardware retains transmit capability.

\begin{quote}
\textit{Security note.} This configuration provides weaker unidirectionality assurance than Configurations~1 and~2 because enforcement depends on the correctness of kernel policy and perimeter controls. A software vulnerability, misconfiguration, or privileged process on the terminal could in principle re-enable outbound capability. Deployments using this configuration should treat policy correctness as an additional assumption alongside Assumptions~2.1--2.4.
\end{quote}

\paragraph{Reference implementation.}
The terminal connects to a dedicated VLAN with no default gateway toward the public internet. The upstream firewall enforces an egress ACL dropping all traffic sourced from the terminal's IP address. On the terminal, \texttt{iptables}/\texttt{nftables} drops all OUTPUT traffic except loopback. Inbound payload delivery uses standard IP unicast or multicast from an authenticated backend.

\begin{itemize}
  \item \textbf{Network interface:} standard Ethernet NIC (bidirectional hardware)
  \item \textbf{Kernel policy:} OUTPUT chain DROP for all non-loopback destinations
  \item \textbf{Perimeter enforcement:} upstream firewall egress ACL blocking terminal source IP
  \item \textbf{Return channel:} absent under correct operation; hardware retains transmit capability
  \item \textbf{Additional assumption required:} outbound blocking policy correctly implemented and maintained
\end{itemize}

\paragraph{Enforcement tier.}
Policy-enforced (Conditional in Table~1). The security argument of Sections~5--6 holds under this configuration subject to correct implementation and operation of the outbound blocking policy.

\begin{table}[h]
\centering
\begin{tabular}{ll}
\hline
\textbf{Property} & \textbf{Value} \\
\hline
Unidirectionality enforcement & Kernel policy + perimeter firewall \\
Definition 4.1 satisfied & Conditional---absent under correct operation \\
Typical throughput & Standard Ethernet/IP rates \\
Coverage & Institutional LAN/WAN \\
Network infrastructure at receiver & Standard NIC---full hardware retained \\
Return channel if policy fails & Possible \\
Primary deployment context & Environments without broadcast/diode infrastructure \\
\hline
\end{tabular}
\caption{Configuration 3 operational characteristics.}
\label{tab:config3}
\end{table}

\subsection{Configuration Comparison}

\begin{table}[h]
\centering
\small
\begin{tabular}{lccc}
\hline
\textbf{Property} & \textbf{Config 1: Broadcast} & \textbf{Config 2: Diode} & \textbf{Config 3: Managed IP} \\
\hline
Def.\ 4.1 satisfied        & Yes (unconditional)   & Yes (unconditional)    & Conditional \\
Unidirectionality basis    & No TX circuitry       & Certified HW diode     & Kernel + firewall policy \\
SW subversion possible     & No                    & No (at diode boundary) & Yes (if policy fails) \\
Coverage model             & Broadcast 1:many      & Point-to-point         & IP 1:1 or 1:many \\
Return channel possible    & No                    & No                     & Yes if misconfigured \\
Fleet scalability          & Unlimited             & One port per terminal  & Standard IP scaling \\
Infrastructure at terminal & Power + antenna       & Diode receive port     & Standard NIC \\
\hline
\end{tabular}
\caption{Comparison of deployment configurations against key security properties.}
\label{tab:config-comparison}
\end{table}

\textbf{Note A.1 (5G Broadcast).}
3GPP Release~17 defines NR Multicast and Broadcast Services (MBS), enabling 5G networks to operate in a broadcast-only device profile in which the UE does not establish a bidirectional connection. In this profile the device satisfies Definition~4.1 conditionally---the platform hardware retains bidirectional capability but the device profile suppresses uplink. This is analogous to Configuration~3 and is included in Table~1 as \emph{Conditional}. Operators should verify device profile enforcement at the hardware modem level for high-assurance deployments.

\end{document}